\documentclass[twoside,11pt]{article}
\usepackage{jmlr2e}

\usepackage{caption}
\usepackage{color}
\usepackage{hyperref}
\hypersetup{
    colorlinks = true, 
    linkcolor = blue,
    filecolor = magenta, 
    citecolor = green,
    urlcolor = cyan,
    pdfborder = {0 0 0}
}
\usepackage{algorithm}
\usepackage[noend]{algorithmic}
\newtheorem{theorem}{Theorem}

\newtheorem{definition}{Definition}

\usepackage{booktabs}
\usepackage{amsmath} 
\usepackage{enumitem}
\usepackage{bm}

\newcommand{\cgame}{$C^{3}_{\text{ext}}$}

\usepackage{subcaption}


\def\min{\qopname\relax n{min}}
\def\max2{\qopname\relax n{max2}}
\def\max{\qopname\relax n{max}}


\newcommand{\EE}{\mathbb{E}}

\newcommand{\RR}{\mathbb{R}}
\newcommand{\NN}{\mathbb{N}}

\def\G{\mathcal{G}}

\def\N{\mathcal{N}}

\def\S{\mathcal{S}}

\def\X{\mathcal{X}}

\def \cG {\mathcal{G}}


\def\x{\bm{x}} 
 
\def\y{\bm{y}} 
\def\w{\mathbf{w}} 
 
\def\c{\bm{c}} 
\def\e{\mathbf{e}}

\def\t{\bm{t}}

\def\e{\bm{e}} 
 
\def\g{\bm{g}} 
 
\def\s{\bm{s}}

\def\w{\bm{w}} 
\def\z{\bm{z}}


\newenvironment{lp*}{\begin{equation*}  \begin{array}{lll}}{\end{array}\end{equation*}}


\jmlrheading{1}{2021}{1-48}{4/00}{10/00}{meila00a}{Fan Yao, Chuanhao Li, Denis Nekipelov, Hongning Wang and Haifeng Xu}

\ShortHeadings{User Welfare Optimization with Competing Content Creators}{User Welfare Optimization with Competing Content Creators}
\firstpageno{1}

\begin{document}

\title{User Welfare Optimization in Recommender Systems with Competing Content Creators}

\author{\name Fan Yao$^1$ \email fy4bc@virginia.edu
       \AND
       \name Yiming Liao $^2$ \email yimingliao@meta.com
       \AND
       \name Mingzhe Wu$^1$ \email mw3nzd@virginia.edu 
       \AND
       \name Chuanhao Li$^4$ \email chuanhao.li.cl2637@yale.edu 
       \AND
       \name Yan Zhu$^5$ \email yanzhuyz@google.com
       \AND
       \name James Yang$^2$ \email jamesjy@meta.com
       \AND
       \name Qifan Wang$^2$ \email wqfcr@meta.com
       \AND
       \name Haifeng Xu$^3$ \email haifengxu@uchicago.edu
       \AND
       \name Hongning Wang$^1$ \email wang.hongn@gmail.com 
       \AND \\
       \addr $^1$Department of Computer Science, University of Virginia, USA \\
       \addr $^2$Meta, USA\\
       \addr $^3$Department of Computer Science, University of Chicago, USA\\
       \addr $^4$Department of Statistics and Data Science, Yale University, USA\\
    \addr $^5$Google, USA
    }
\maketitle

\begin{abstract} 
Driven by the new economic opportunities created by the creator economy, an increasing number of content creators rely on and compete for revenue generated from online content recommendation platforms. This burgeoning competition reshapes the dynamics of content distribution and profoundly impacts long-term user welfare on the platform. 
However, the absence of a comprehensive picture of global user preference distribution often traps the competition, especially the creators, in states that yield sub-optimal user welfare. To encourage creators to best serve a broad user population with relevant content, it becomes the platform's responsibility to leverage its information advantage regarding user preference distribution to accurately signal creators. In this study, we perform system-side user welfare optimization under a competitive game setting among content creators. We propose an algorithmic solution for the platform, which  dynamically computes a sequence of weights for each user based on their satisfaction of the recommended content. These weights are then utilized to design mechanisms that adjust the recommendation policy or the post-recommendation rewards, thereby influencing creators' content production strategies. 
To validate the effectiveness of our proposed method, we report our findings from a series of experiments, including: 1. a proof-of-concept negative example illustrating how creators' strategies converge towards sub-optimal states without platform intervention; 2. offline experiments employing our proposed intervention mechanisms on diverse datasets; and 3. results from a three-week online experiment conducted on a leading short-video recommendation platform. 
\end{abstract}

\section{Introduction}

Online content recommendation platforms have evolved into an indispensable component of our daily lives \citep{bobadilla2013recommender}. These platforms play a pivotal role in assisting their users in navigating the vast ocean of content generated by revenue-seeking creators, including various social media platforms (e.g., Facebook, Instagram), streaming services (e.g., YouTube, TikTok), and many more. One of the primary functions of these recommendation platforms is to advance user welfare, defined as the overall volume and quality of interactions between users and content. This metric is widely regarded as a fundamental indicator of the well-being of an online ecosystem and is also closely tied to the platform's revenue. 

After decades of effort in relevance-driven matching between users and content, industry practitioners and researchers have reached the consensus that user welfare optimization cannot be achieved through myopic approaches that merely target at eliciting and predicting user preferences   \citep{qian2022digital,boutilier2023modeling,zhan2021towards,mladenov2020optimizing,yao2022learning,yao2022learning2,dean2022preference,biyik2023preference}. One primary reason is because any matching strategy has a profound impact on content creators' beliefs about the users' demand and consequently their reactions, i.e., what to produce next, leading to a shift in the distribution of content available for recommendation. 
This influence pathway is unfortunately overlooked in existing recommendation algorithm design; and therefore, there is a great need for a robust recommendation strategy that operates with respect to creators' strategic responses and the resultant content dynamics. It is imperative for the platform to encourage creators in generating content that continuously contributes to the overall health of the ecosystem. 

Typically, creators' well-being is intricately linked to the exposure of their content and the economic incentives they accrue from the platform, compelling them to continuously strive for maximized benefits \citep{glotfelter2019algorithmic,hodgson2021spotify}. This dynamic creates a competitive environment that leads to intriguing phenomena in terms of welfare guarantees at equilibrium \citep{fleder2009blockbuster,jagadeesan2022supply,zhu2023online}. 
For instance, \citet{yao2023bad} introduced a game theoretical framework to investigate competition dynamics among content creators. Their research revealed that social welfare loss can be attributed to factors such as the degree of exploration in users' decision making and the span of recommendation slots. As indicated by many previous studies, the platform suffers from sub-optimal social welfare and thus undermines long-term revenue when content distribution lacks necessary diversity to cater to various users' preferences. This issue is also observed in empirical studies, where content creators often exhibit a tendency to chase trends \citep{holmbom2015youtuber,nandagiri2018impact}. 
In essence, creators tend to produce content that arouses the interests of the majority user group, owing to the group's high visibility and the creators' myopic creation strategies \citep{yao2023bad,jagadeesan2022supply}. 
However, it is our contention that the platform should not simply blame creators for their perceived selfishness and myopia. This is because creators do not possess a holistic view of the demand distribution, i.e., user preferences. Instead, it is the platform's responsibility to \textit{disseminate} knowledge about user demand to creators. By doing so, creators can make better informed decisions that mutually benefit their own interest and enhance user welfare (and hence platform's revenue).

In this study, we extend the Content Creator Competition ($C^3$) framework introduced by \cite{yao2023bad,yao2023rethinking}, to model the dynamics of competition among content creators. We relax the behavioral assumptions about creators' updating strategies in the original framework and explore how the platform can design mechanisms to optimize user welfare accordingly. Our key idea is to direct creators' attention towards currently under-served users, by manipulating creators' received utilities with respect to the cumulative user satisfaction about the recommended content. We present a series of approaches to implement the interventions with theoretical justifications.

To validate the effectiveness of our approach, we conducted offline experiments using both synthetic data and the MovieLens dataset, and demonstrated how our mechanism improves user welfare over time under a creator response simulator. Additionally, we deployed an online experiment on a leading short-video recommendation platform over a span of three weeks and observed statistically significant and positive result in terms of the overall user engagement and content diversity. Our model and online experiments offer valuable insights into the design of incentive-aware recommender platforms. To summarize, our contributions can be listed as follows:

\begin{enumerate}[leftmargin=5mm]
    \item We formalize the user welfare optimization problem in a competitive content creation environment and identify the primary cause for potential sub-optimal outcomes: the information asymmetry between content creators and the platform.
    \item We propose a dynamic user importance reweighting approach with theoretical justifications for optimizing user welfare and three implementation schemes which can be applied to various practical scenarios. 
    \item We demonstrate the effectiveness of our solution with both offline simulations and online testing on real traffic. 
\end{enumerate}

\section{Related Work}
\label{sec-related}
The characterization and optimization of long-term dynamics on content platform involving strategic content creators has garnered increasing attention from both theoretical \citep{ben2017shapley,ben2018game,yao2023rethinking,yao2023bad,zhu2023online,hu2023incentivizing,immorlica2024clickbait,hron2022modeling,jagadeesan2022supply,dean2024recommender,immorlica2024clickbait,xu2024ppa,yao2024human} and empirical \citep{mladenov2020optimizing,prasad2023content} fields. Seminal works from \citet{ben2017shapley,ben2018game} introduced a game theoretical setting to model interactions between content creators and users, and proposed the Shapley mediator to ensure the existence of a pure Nash Equilibrium \citep{nash1950equilibrium}.


Recently, \citet{yao2023bad} demonstrated that due to creators' competition, the user welfare loss under a top-$K$ recommender systems can be upper-bounded by $O(\frac{1}{\log K})$. This finding suggests that the platform can improve user welfare by providing more recommendations. Building on this, the authors further proposed a category of mechanisms for the platform to ensure a stable equilibrium and developed a computational solution to identify the optimal mechanism for social welfare optimization \citep{yao2023rethinking}. Additionally, \citet{zhu2023online} introduced an online learning method to jointly optimize recommendation policy and payment contracts for creators to maximize accumulated utility. \citet{hu2023incentivizing} designed a learning algorithm to incentivize the creation of high-quality content. However, all these studies rely on strong behavioral assumptions about content creators, e.g., they can perform no-regret learning \citep{yao2023bad}, or have oracle access to their utility functions \citep{yao2023rethinking,ben2017shapley,ben2018game}, so that the Nash equilibium is achievable. Our work bridges this gap by developing a system-side solution to optimize user welfare that even when creators are not able to achieve Nash equilibria.


On the empirical side, \citet{mladenov2020optimizing} explored a scenario where content creators may leave the platform if their user engagement falls below a threshold. The study optimized social welfare by solving a constrained matching problem. In a similar spirit, \citet{prasad2023content} introduced a sequential prompting policy aimed at optimizing user welfare in equilibrium. The optimal policy was determined through mixed integer programming. The solutions were reported to be effective under specific behavioral assumptions or environmental contexts, e.g., the platform can send prompts to creators as additional signals. However, the platforms are often constrained in their ability to influence the ecosystem. They may primarily rely on monetary incentives to motivate creators and have limited flexibility to manipulate factors beyond matching strategies and post-matching rewards. Our solution addresses this broader range of scenarios, making it applicable, for example, when creators are highly responsive to monetary incentives, and the platform's influence is primarily exerted through adjustments to matching probabilities and post-matching rewards.

\section{The Modeling of Content Creation Competition}

In this section, we formulate the competition among content creators (i.e., players) as a strategic game, which will serve as an environment for the subsequent mechanism design problem. At a high level, each creator's utility is determined by the platform's matching strategy and the post-matching reward function. Creators adhere to simple, local update principles to sequentially alter their strategies, resulting in a dynamic content distribution on the platform. 
The primary objective of the platform is to optimize the cumulative user welfare by designing its matching strategy and post-matching reward function. Our strategic game setup builds upon and extends the framework of Content Creator Competition ($C^3$) game introduced by \citet{yao2023bad,yao2023rethinking}. For the sake of simplicity in nomenclature, we retain the name of $C^3$ and refer to our game as \cgame{}, i.e., an extension of the $C^3$. Formally, a \cgame{} instance is defined by the following tuple: $\big(\X,\{\S_i\}_{i=1}^n, \sigma, \beta, K, R(\cdot)\big)$, which we explain in details below.

\begin{enumerate}[leftmargin=15pt]
    \item \textbf{Basic setups: } a user distribution $\X$ with finite support $\{\x_j\in \RR^d\}_{j=1}^m$, and a set of content creators denoted by $[n]=\{1,\cdots,n\}$. Each creator $i$ can take an action $\s_i$, is often referred to as a \emph{pure strategy} in game-theoretic literature, from an action set $\S_i \subset \RR^d$. $\s_i$ can be understood as the embedding of content that creator $i$ will produce. 
    Without loss of generality, we assume the $L_2$ norms of any $\x$ and $\s_i$ are upper bounded by 1. 
    \item \textbf{Relevance function:} the relevance function $\sigma(\s, \x): \RR^d\times \RR^d \rightarrow \RR_{\geq 0}$ measures the \emph{relevance} score between a user $\x \sim \X$ and content $\s$. Without loss of generality, we normalize $\sigma$ to $[0, 1]$, where $1$ suggests perfect matching. We focus on modeling the strategic behavior of creators and thus abstract away the estimation of $\sigma$ \footnote{We assume $\sigma$ is learned from the offline data and $\sigma(\s_i, \x)$ is an unbiased estimation of user $\x$'s satisfaction when exposed to $\s_i$. }. For simplicity, we use $\sigma_{i,\x}$ to denote $\sigma(\s_i, \x)$ when the joint strategy profile $\s=(\s_1,\cdots,\s_n)\in\S$ and user profile $\x$ are clear in the context of our discussion. 
    \item \textbf{Matching function:} Given any user $\x\in \X$ and when each creator commits to a strategy $\s_i$, the platform retrieves the top-$K$ ranked content in terms of the relevance scores $\{\sigma_{i,\x}\}_{i=1}^n$ and match one of them to $\x$. Specifically, let $\{\sigma_{l(1),\x}\geq \cdots\geq \sigma_{l(n),\x}\}$ be a permutation of $\{\sigma_{i,\x}\}_{i=1}^n$, we assume that the platform would pick $\s_{\x}\in L_{\x}(K;\s)\triangleq \{\sigma_{l(i),\x}\}_{i=1}^K$ using a softmax distribution with temperature $\beta\geq 0$ \footnote{The formulation in \citep{yao2023bad} also assumes the platform retrieve top-$K$ content for each user, but let the user to choose one according to the Random Utility model. The resulting matching probability shares the same form as in Eq. \eqref{eq:matching_rule}, but differs in the sense that the $\beta$ in our setting is a parameter controlled by the platform while it is the user decision noise in \citep{yao2023bad}.}, i.e., 
    \begin{equation}\label{eq:matching_rule}
        P_i(\s,\x) \triangleq Prob [\s_{\x}=\s_{l(i)}] \propto \exp[\beta^{-1} \sigma_{l(i),\x}], 1\leq i\leq K. 
    \end{equation}
    A small $\beta$ makes the matching strategy more deterministic, and $\beta \rightarrow \infty$ corresponds to random matching.
    \item \textbf{User utility and welfare: } When user $\x$ is matched with $\s$, the user's perceived utility is given by a function $\pi(\s, \x)$. The user welfare $W(\s)$ is thus defined as the total expected utility resulted from the matching,
    \begin{equation}\label{eq:welfare_func_general}
        W(\s) = \EE_{\x\sim\X}[\pi(\s_{\x},\x)].
    \end{equation}
    To simplify the technical discussions, we assume the learned relevance function $\sigma$ is an unbiased estimation of $\pi$, and therefore $W(\s)$ can be simplified to   
    \begin{equation}\label{eq:welfare_func}
        W(\s) = \EE_{\x\sim\X}[\sigma(\s_{\x},\x)].
    \end{equation}
    However, our proposed solution works for general welfare function defined in Eq \eqref{eq:welfare_func_general}.
    \item \textbf{Creator utility: }
    For creator $i$, her utility is given by
    \begin{equation}\label{eq:creator_utility_func}
            u_i(\s)=\EE_{\x\in\X} \left[R(\s_i, \x) \cdot P_i(\s,\x)\right],
    \end{equation}
    where $R(\s_i, \x)$ is the system-provided reward for this matching.
    
    Natural choices of $R$ include $R(\s_i, \x)$ being proportional to the user's perceived utility, or simply setting $R(\s_i, \x)=1$ (i.e., reward creators by the amount of traffic). Therefore, we have
    \begin{align}\label{eq:creator_expected_utility_func}
            &u_i(\s)=\EE_{\x\in\X} \left[\sigma(\s_i, \x) \cdot P_i(\s,\x)\right], \\ 
            \label{eq:creator_expected_utility_func2}
            &u_i(\s)=\EE_{\x\in\X} \left[P_i(\s,\x)\right],
    \end{align}
    Throughout the paper we adopt Eq \eqref{eq:creator_expected_utility_func} as the platform's default choice, as it is demonstrated in \citep{yao2023bad} that rewarding creators by user utility enjoys a better welfare guarantee than rewarding them by traffic.
  \end{enumerate}

The most well established concept for characterizing a game's outcome is pure Nash equilibrium (PNE) \citep{nash1950equilibrium}. At a PNE, any possible deviation from a player's current strategy would not increase her utility conditioned on other players' strategies. Under some mild assumptions, we can prove that the PNE of our \cgame{} game exists and is unique as stated in the following theorem. 

\begin{theorem}\label{thm:PNE}
 Any \cgame{} game with $K=n$ has a unique pure Nash equilibrium (PNE) under the utility function \eqref{eq:creator_expected_utility_func2} if $\sigma(\cdot)$ is sufficiently smooth and concave and each creator has a convex strategy set. 

\end{theorem}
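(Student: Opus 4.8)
The plan is to recognize that, once specialized to $K=n$ and the reward-by-traffic utility \eqref{eq:creator_expected_utility_func2}, the game is a \emph{concave $n$-person game} in the sense of Rosen, and to invoke the two classical pillars of that theory: a fixed-point argument for existence and a diagonal strict concavity (monotonicity) argument for uniqueness. When $K=n$ every creator lies in the retrieved slate, so the matching probability collapses to the full softmax $P_i(\s,\x)=e^{\beta^{-1}\sigma(\s_i,\x)}/\sum_{k}e^{\beta^{-1}\sigma(\s_k,\x)}$, and each payoff becomes $u_i(\s)=\EE_{\x\sim\X}[P_i(\s,\x)]$. The sets $\S_i$ are convex by hypothesis and compact (closed subsets of the unit ball), and $u_i$ is continuous in $\s$ since $\sigma$ and the softmax are smooth; so the only non-trivial structural property to establish is concavity of $u_i$ in its own block $\s_i$.

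First I would prove own-concavity by a direct Hessian computation. Differentiating the softmax twice yields the identity
\begin{equation*}
\nabla^2_{\s_i}P_i(\s,\x)=\beta^{-1}P_i(1-P_i)\Big[\nabla^2_{\s_i}\sigma(\s_i,\x)+\beta^{-1}(1-2P_i)\,\nabla_{\s_i}\sigma(\s_i,\x)\,\nabla_{\s_i}\sigma(\s_i,\x)^{\top}\Big].
\end{equation*}
The prefactor $\beta^{-1}P_i(1-P_i)$ is positive, $\nabla^2_{\s_i}\sigma\preceq 0$ by concavity, and the rank-one term is harmless when $P_i\ge 1/2$ but becomes positive semidefinite when $P_i<1/2$. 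This is exactly where ``sufficiently smooth and concave'' enters: if $\sigma$ is $\mu$-strongly concave with gradient bound $G$ satisfying $\mu\beta\ge G^2$, then the bracket is negative semidefinite for every $P_i\in(0,1)$ (the worst case $P_i\to 0$ gives the condition $\nabla^2_{\s_i}\sigma+\beta^{-1}\nabla_{\s_i}\sigma\,\nabla_{\s_i}\sigma^{\top}\preceq(\beta^{-1}G^2-\mu)I\preceq 0$). Hence each integrand $P_i(\cdot,\x)$ is concave in $\s_i$, and $u_i$, as an $\X$-expectation of such functions, is concave in $\s_i$ as well. With own-concavity, continuity, and convex compact strategy sets, existence of a PNE follows immediately from the Debreu--Glicksberg--Fan fixed-point theorem (equivalently, Rosen's existence result for concave games).

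For uniqueness I would verify Rosen's diagonal strict concavity condition with uniform weights: form the pseudo-gradient $F(\s)=(\nabla_{\s_1}u_1(\s),\ldots,\nabla_{\s_n}u_n(\s))$ and show that its Jacobian $J(\s)$ has negative-definite symmetric part $\tfrac12(J+J^{\top})\prec 0$ on the product strategy set. Monotonicity of $-F$ then rules out two distinct equilibria by the standard first-order contradiction argument. The diagonal blocks of $J$ are the Hessians $\nabla^2_{\s_i}u_i$, already shown to be strictly negative definite (with margin $\sim\mu-\beta^{-1}G^2>0$ under the strict inequality $\mu\beta>G^2$), so the crux is controlling the off-diagonal blocks $\nabla_{\s_j}\nabla_{\s_i}u_i$, which are nonzero precisely because all creators share the same softmax denominator.

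I expect this off-diagonal coupling to be the main obstacle. Each cross block carries a factor $\beta^{-1}P_iP_j$ together with a gradient product $\nabla_{\s_i}\sigma\,\nabla_{\s_j}\sigma^{\top}$, and one must bound the aggregate of these mixed terms so that the diagonal strong-concavity margin dominates. The cleanest route is to exploit $\sum_k P_k=1$ and the uniform gradient bound $G$ from smoothness to reduce negative-definiteness of $\tfrac12(J+J^{\top})$ to a scalar inequality of the form $\mu\beta\ge cG^2$ for an explicit constant $c$; this is the quantitative content of ``sufficiently'' in the hypothesis. Once that inequality is secured, Rosen's uniqueness theorem applies and the proof is complete.
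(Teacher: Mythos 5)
Your overall framework---Rosen's monotone/diagonally-strictly-concave game theory, existence from own-concavity plus convex compact strategy sets, uniqueness from negative definiteness of the symmetrized Jacobian of the pseudo-gradient $F(\s)=(\nabla_{\s_1}u_1,\ldots,\nabla_{\s_n}u_n)$---is exactly the paper's route, and your own-block Hessian computation recovers the paper's actual sufficient condition, namely $\EE_{\x}\bigl[\tfrac{\partial^2\sigma}{\partial\s_i^2}+\bigl(\tfrac{\partial\sigma}{\partial\s_i}\bigr)\bigl(\tfrac{\partial\sigma}{\partial\s_i}\bigr)^{\top}\bigr]\preceq 0$ (your worst case $P_i\to 0$). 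The problem is that you stop precisely at the step you correctly identify as the crux: the off-diagonal blocks. You propose to \emph{bound} the cross-terms $\nabla_{\s_j}\nabla_{\s_i}u_i$ and let the diagonal strong-concavity margin dominate them, anticipating a scalar condition $\mu\beta\geq cG^2$. That is a plan, not a proof, and it is also the wrong move: the paper shows that no domination argument is needed because the cross-terms are not adversarial at all. Collecting every term carrying a gradient outer product (the off-diagonal blocks together with the rank-one corrections sitting on the diagonal), the paper forms a block matrix $[H^{(1)}_{ij}]$ with $(i,j)$ entry proportional to $\y_i\y_j^{\top}(M-A_i-A_j)$ where $\y_i=A_i\,\partial\sigma/\partial\s_i$, rewrites it as $\tfrac{1}{M}(M\y-\z)(M\y-\z)^{\top}+\mathrm{diag}(A_i\y_i\y_i^{\top})-\tfrac{1}{M}\z\z^{\top}$, and verifies positive semidefiniteness of the last two terms through the exact sum-of-squares identity
\begin{equation*}
\t^{\top}\tilde{H}\t=\sum_{1\leq i<j\leq n}A_iA_j\bigl(\y_i^{\top}\t_i-\y_j^{\top}\t_j\bigr)^2\;\geq\;0 .
\end{equation*}
This identity is the one genuinely nontrivial ingredient of the proof, and it shows the entire softmax coupling contributes with the \emph{favorable} sign, leaving only the diagonal condition you already derived. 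Without it (or an equivalent), your uniqueness argument is incomplete.

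Beyond being incomplete, your fallback bounding strategy is unlikely to recover the theorem as stated. A Gershgorin-type domination of the $n^2$ off-diagonal blocks would at best yield a hypothesis strictly stronger than the paper's (an extra constant $c>1$, i.e., a quantitatively more demanding notion of ``sufficiently concave''), and the per-$\x$ operator-norm bounds you would need do not commute cleanly with the expectation $\EE_{\x\sim\X}$, which is where the paper's final condition actually lives. Two smaller points: (i) your claim that the rank-one term is ``harmless when $P_i\geq 1/2$'' is irrelevant to the final condition, since for the symmetrized game Hessian the sign pattern of all $P$-dependent factors must be tracked jointly rather than block by block; (ii) you silently assume $\S_i$ is compact for the fixed-point existence step, which the theorem statement does not grant (it only assumes convexity, with boundedness coming from the norm normalization in the setup)---worth flagging, since the paper is equally silent on closedness.
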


Theorem \ref{thm:PNE} guarantees the existence of a unique PNE and thus theoretically allows the platform to establish a stable outcome. However, in practical scenarios, we find it uninteresting to either generalize this result or delve further into its properties for two reasons. First, it is rare for $K=n$ to hold in practice because no system will present the entire collection of content to each user.
When $K<n$, the existence of a PNE becomes challenging to establish, due to the discontinuity of the utility functions caused by the top-$K$ ranking operator during the matching process. 
Second, even when a PNE does exist, it does not suggest that creators can consistently reach it through sequential updates. Furthermore, the existence of a PNE does not necessarily imply it is easily achievable in practice, nor does it suggest an improved user welfare. In fact, as we will demonstrate in Section \ref{sec:failure}, even in a simple environment with a unique PNE, a natural updating dynamics among creators fails to converge to the PNE and results in sub-optimal user welfare.

Therefore, we focus on a more practical solution concept called \emph{Local Nash equilibria (LNE)}. While a PNE requires that all players do not want to deviate to any other strategy in the entire space, an LNE merely stipulates players are satisfied with their strategies in a local region. Its formal definition is given as follows. 

\begin{definition}\label{def:LNE} \emph{A profile of creator strategies $\{ \s^*_i \}_{i=1}^n$ forms a local Nash equilibrium (LNE), if for every creator $i$, there exists an open set $\S^0_i\in\S_i$ such that $\s^*_i$ is a best response strategy within $\S_i^0$; formally, 
 \begin{equation}\label{eq:lne-def}
      u_i(\s^*_i,\s^*_{-i}) \geq   u_i(\s_i,\s^*_{-i}) \, \,  \text{ for every } \s_i \in \S^0_i.
 \end{equation}}
 \end{definition} 

We argue that LNE offers a more intuitive and practical solution concept for consideration due to two observations. First, the strategic evolution of content creation is often deeply intertwined with creators' historical decisions \citep{kajander2019challenges}. This correlation stems from content generation being anchored in domain-specific expertise and accumulated experiences, which are inherently stable attributes. As a result, the produced content usually demonstrates path dependency, posing significant challenges for creators in implementing drastic modifications. Second, creators are typically constrained by a lack of comprehensive insights into their utility functions due to a limited understanding of the user demographic and the distribution of user preferences. Given these constraints, creators are likely to resort to incremental adjustments for strategy update.


Hence, we focus on the setting where creators engage in a repeated play of \cgame{} and employ a local searching rule termed local better response (LBR) update for improving their strategies. The details of LBR is presented in Algorithm \ref{alg:creator_dynamic} in Appendix \ref{app:lbr}. LBR characterizes two fundamental properties of content creation: 1. it relies solely on point estimations of the utility function; and 2. it only incurs local changes at each update. At each step, a creator who decides to update her strategy would first generate an exploration direction $\g_i$ and then she would evaluate whether adjusting her strategy in this direction results in a higher utility. If so, she proceeds to update her strategy along $\g_i$ in a pace of $\eta$; otherwise, she maintains her current strategy. This procedure closely emulates real-world scenarios where creators strive to optimize their utilities while having merely black-box access to the utility functions. In practice, finding a clear direction that guarantees improved utility can be a challenging and, at times, unrealistic task. Consequently, we model their strategy evolution as an iterative process of trial and error. By definition, when LBR converges in \cgame{}, it must converge to an LNE. Our primary interest lies in understanding how the platform can devise a dynamic rewarding or matching principle that maximizes cumulative user welfare within a given time period.




\section{Intervention Mechanism Design}
In this section, we introduce the new intervention mechanism designed to optimize user welfare. These mechanisms are intended for the platform to influence creators' perceived utilities, thereby guiding the evolution of their strategies toward more desirable outcomes. We will first establish the need for platform-driven mechanism design by illustrating how suboptimal results can arise in a simplified example without any intervention. Subsequently, we will delve into the specifics of our proposed methods.

\begin{figure}[t]
  \centering
  \includegraphics[trim=40 40 40 40, clip,width=0.49\linewidth]{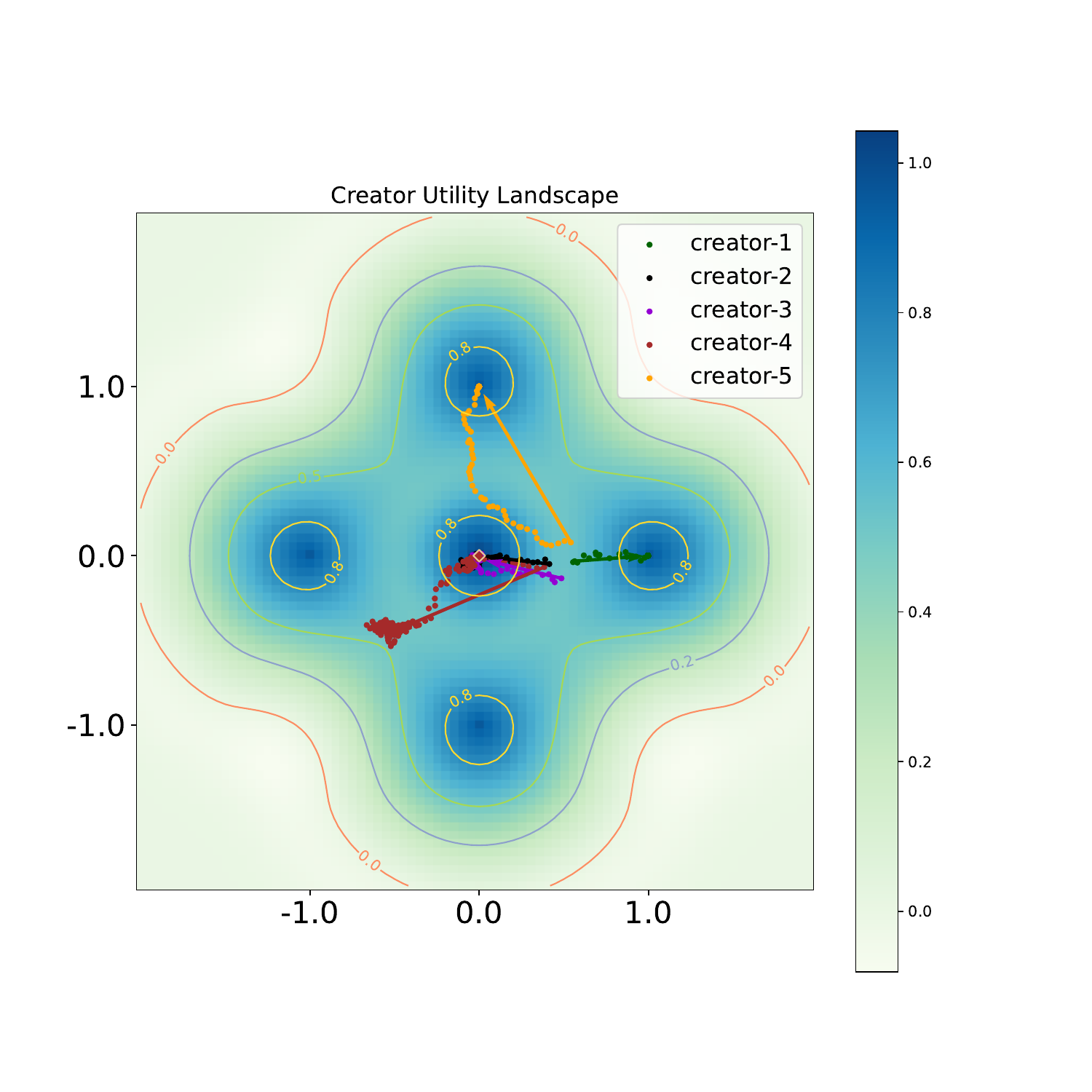}
  \includegraphics[trim=40 40 40 40, clip,width=0.49\linewidth]
{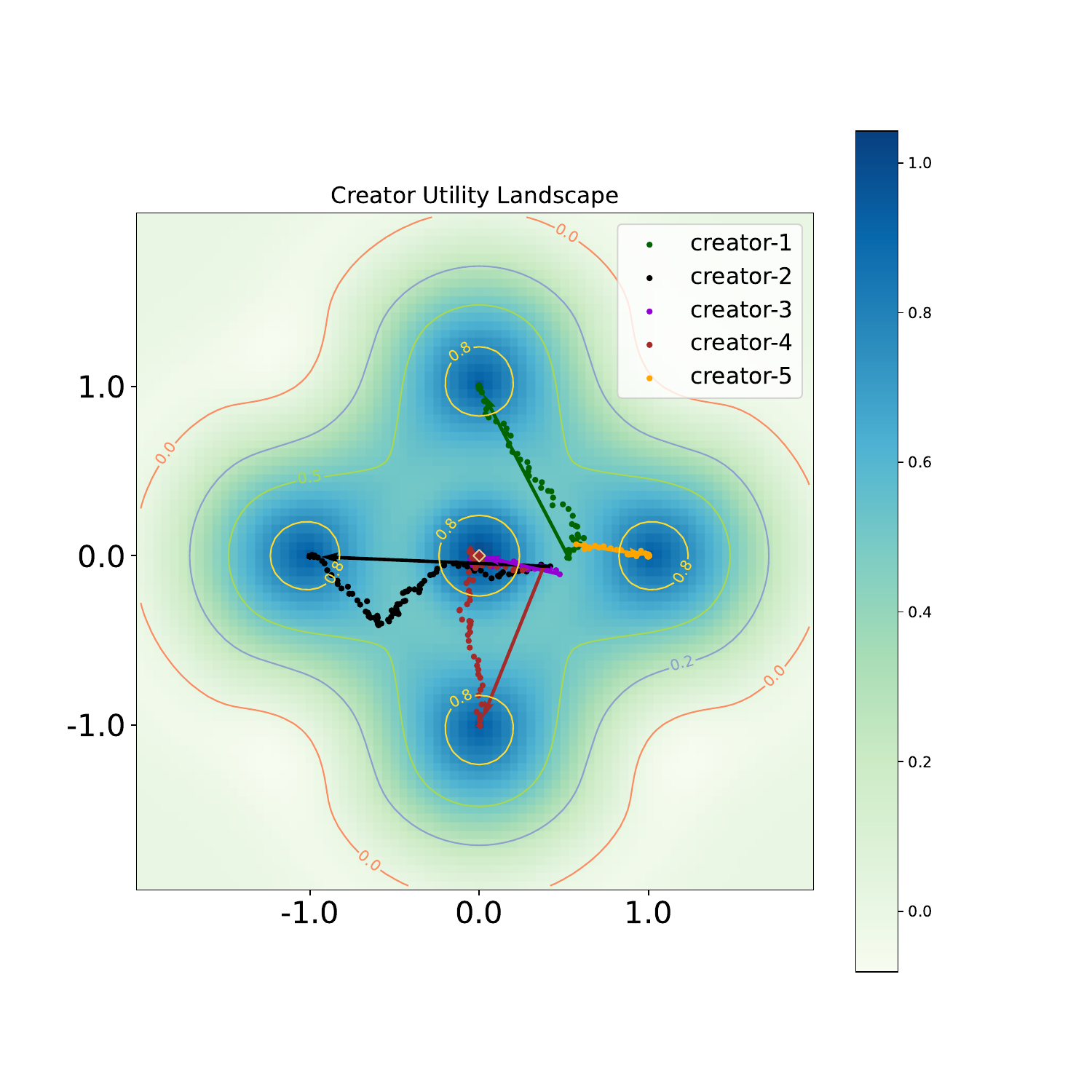}
\vspace{-3mm}
  \caption{Visualization of creators' evolving strategies. Left: no intervention, right: platform decreases the weight of the center user by half. Creators' strategies are marked with different colors, and the arrows start from initial strategies and point to the last-iterate strategies.}
  \label{fig:example}
\vspace{-3mm}  
\end{figure}

\subsection{The Necessity of Intervention}\label{sec:failure}


We start with a simple illustrative example to show how the competition among creators could result in quite inferior user welfare in \cgame{} when creators employ local update dynamics specified in Algorithm \ref{alg:creator_dynamic}. This example exhibits a stark contrast to the sound welfare guarantee for no-regret learning \citep{belmega2018online} equipped creators in \citep{yao2023bad}. Consider a \cgame{} instance $(\X,\{\S_i\}_{i=1}^n, \sigma, \beta, K, R(\cdot))$ described below. The user population $\X$ is evenly distributed over the finite set $ \{\x_j\}_{j=1}^5= \{(0,0),(1,0),(0,1),(-1,0),(0,-1)\}$ and there are $n=5$ content creators, each with action set $\S_i=\RR^2$. The reward function is defined as $R(\s_i,\x)=\sigma(\s_i,\x)=\max\{2-\|\s_i-\x\|_2, 0\}$ and $\beta=10, K=3$. It is evident that the user welfare defined in Eq \eqref{eq:welfare_func} is maximized when each creator precisely targets a single user, i.e., $\s_i=\x_i, 1\leq i\leq 5$, which also represents the PNE of this game. However, as we will illustrate through simulations, creators' strategies do not converge to the PNE nor optimize the user welfare under the LBR dynamics when the platform does not intervene. 

First, let's examine what happens when the platform takes no action to guide the creators. The left panel of Figure \ref{fig:example} visualizes the trajectories of strategy evolution in our constructed environment. Initially, creators' strategies are randomly distributed in the region between $\x_1$ and $\x_2$. Over time, $\x_2$ and $\x_3$ are exclusively occupied by one creator each, while $\x_1$ has two creators competing for it. The remaining creator chooses not to target either $\x_4$ or $\x_5$ and hovers around the region between $\x_4$ and $\x_5$, leaving both $\x_4$ and $\x_5$ unsatisfied. 

From the observed strategy evolution paths, we can deduce how this sub-optimal situation arises. Initially, creators move in different directions: two creators quickly converge to $\x_2$ and $\x_3$, while the remaining three compete for the attention of the central user $\x_1$. However, after this point, no creator has a strong incentive to move closer to $\x_4$ or $\x_5$, as the marginal utility gained from getting closer to $\x_4$ or $\x_5$ does not compensate for the loss incurred by moving away from $\x_1$. Consequently, two creators decide to remain around $\x_1$ and one creator settles in a region between $\x_4$ and $\x_5$.

The above observations highlight the pivotal role played by the central user $\x_1$ in the occurrence of sub-optimal results. Since $\x_1$ is close to other users in the embedding space, targeting $\x_1$ becomes a popular and safe choice for creators. It secures a fraction of attention from $\x_1$ without completely sacrificing the utility gained from other user groups. Thus, users like $\x_1$ act as ``popular states'' when creators dynamically adjust their strategies. Whenever a creator is located near $\x_1$, they are likely to be trapped and reluctant to explore potentially better strategies. Consequently, such ``popular'' users end up attracting more creators, leaving other users unattended.

One immediate solution for the platform is to identify and reduce the impact of these ``popular" users. For instance, the platform can halve the utility gained from the central user $\x_1$ for each creator. This simple mechanism works effectively in this example, as illustrated in the right panel of Figure \ref{fig:example}. Initially, there are still three creators converging to $\x_1$. However, due to the reduced reward from $\x_1$, two creators find it less profitable to stay, driving them to deviate towards $\x_4$ and $\x_5$. By assigning different importance weight for each user, the platform can reshape each creator utility landscape and therefore influence their local search based dynamical behaviors.

\subsection{Platform's Intervention Mechanisms}

The observations above motivate our design of intervention mechanisms that can be employed by the platform to influence creators' perceived utilities. These mechanisms lay the foundation for the adaptive optimization methods we will delve into later. As a reminder, as defined in Eq \eqref{eq:creator_utility_func}, a creator's expected utility from a specific user $\x$ is influenced by two key factors: the probability of creator $i$ being matched with user $\x$ denoted as $P_i(\s,\x)$, and the post-matching reward assigned by the platform, denoted as $R(\s_i, \x)$. 
The default choice of the platform is to set the reward function $R(\s_i, \x)=\sigma(\s_i, \x)$ as in Eq \eqref{eq:creator_expected_utility_func} and the matching probability function $P_i(\s,\x)$ as the softmax over the top-$K$ ranked content $\s_i$ as demonstrated in Eq \eqref{eq:matching_rule}.


In the example provided in Section \ref{sec:failure}, the primary factors leading to sub-optimal welfare is the presence of popular user groups that attract excessive creator attention, making minority user groups unnoticed by creators. To enhance overall user welfare, it is crucial for the platform to guide creators' attention toward these overlooked user groups by re-emphasizing their significance. In this way, creators who were previously unaware of these user groups or found them less lucrative may consider adjusting their strategies to align more closely with those users' preferences. To achieve this objective, we introduce and study three different approaches for modifying the schemes of $R(\s_i, \x)$ and $P_i(\s,\x)$, namely User Importance Reweighting (UIR), Soft Matching Truncation (SMT), and Hard Matching Truncation (HMT). These three mechanisms share a common underlying principle, but they are designed to operate under different scenarios, taking into account potential constraints faced by a platform.

\noindent {\bf User Importance Reweighting (UIR)} 

\noindent
The most straightforward approach is UIR,
\begin{equation}\label{eq:augmented_utility_UIR}
    u_i(\s_i,\s_{-i})=\EE_{\x\in \X} [w(\x)\cdot R(\s_i, \x)\cdot P_i(\s,\x)],
\end{equation}
where the platform simply adjusts the post-matching rewards for creators based on the measured importance of each user. Specifically, if the platform believes a user has been under-served under the current content distribution, it raises the reward for creators whose content is consumed by such a user. Intuitively, this sends a message to creators that ``if you shift your content towards such users, you will get a higher marginal reward compared to sticking to your current content.'' As a result, the platform can carefully design the user weights such that a reasonable number of creators can be successfully incentivized to serve the targeted users. 

\noindent \noindent {\bf Soft Matching Truncation (SMT) and Hard Matching Truncation (HMT)} 

\noindent
Both SMT and HMT function in a similar manner as UIR but focus on manipulating the matching probability rather than the post-matching reward by utilizing the weight $w(\x)$. Recall that the probabilistic matching function $P$ is characterized by two parameters: the truncation number $K$ (which, in practice, corresponds to the total number of recommendation candidates retrieved for ranking) and the temperature $\beta$ (which can be viewed as a measure of the exploration strength in the ranking model). 
When the platform needs to signal the importance of a specific user $\x$, it enhances $\x$'s visibility among creators, increasing the chance that creators who were previously unaware of $\x$ start realizing the potential benefits of catering to $\x$. This can be achieved by either increasing $\beta$ or $K$: increasing $\beta$ flattens the distribution of $\x$'s matches among the top-$K$ candidates, while increasing $K$ enlarges the pool of creators exposed to $\x$. Therefore, both of them augment the expected number of creators exposed to $\x$. Since $K$ imposes a rigid threshold on the number of creators exposed to $\x$, while $\beta$ offers a more flexible threshold, we refer to them as Hard Matching Truncation (HMT) and Soft Matching Truncation (SMT), respectively:
\begin{align}\label{eq:augmented_utility_SMT}
    u_i(\s_i,\s_{-i})=\EE_{\x\in \X} [R(\s_i, \x)\cdot P_i(\s,\x; \beta(w(\x)),K)],\\
\label{eq:augmented_utility_HMT}
    u_i(\s_i,\s_{-i})=\EE_{\x\in \X} [R(\s_i, \x)\cdot P_i(\s,\x; \beta,K(w(\x)))].
\end{align}


We remark that UIR is more suitable when the platform possesses the flexibility to design payment incentives for creators. However, if the platform has limited control over payment, such as budget constraints or other factors, SMT or HMT can be employed, as they only require minor adjustments to the matching function. The specific choices of increasing functions $\beta(\cdot),K(\cdot)$ are flexible and we leave it to the experiments.

\subsection{Welfare Optimization through Adaptive Reweighing}

To implement our proposed intervention mechanisms, we need to compute the corresponding user-specific weighting functions, namely $w(\cdot)$, $\beta(\cdot)$, and $K(\cdot)$. In this section we will use UIR as an example to illustrate our method and let the user distribution $\X$ be a uniform distribution over its support $\{\x_1,\cdots,\x_m\}$ so that $w(\cdot)$ can be parameterized by a vector $\w\in\RR^m_{\geq 0}$. When the platform commits to an intervention mechanism $\w$, the content creators' strategic updates according to LBR (i.e., algorithm \ref{alg:creator_dynamic}) will lead their joint strategy to an LNE $\s^*$, which determines the content distribution and the total user welfare $W$. Therefore, the task of finding the optimal $\w$  maximizing $W$ under \cgame{} can be formulated as the following bi-level optimization problem:
\begin{align}\label{eq:bi-level-opt}
    \max_{\w\in \RR_{\geq 0}^m} & \quad W(\s^*(\w)) \\ \label{eq:bi-level-opt-inner}
    \text{s.t., } & \quad \s^*(\w) \text{~is an LNE of~\cgame{}}.
 \end{align}

We adopt the formulation in Eq \eqref{eq:bi-level-opt} simply for presentation purpose, as the constraint in Eq \eqref{eq:bi-level-opt-inner} is not well-defined due to the non-uniqueness of LNE of \cgame{} in general. When we takcle problem in Eq \eqref{eq:bi-level-opt}, we employ either LBR for simulating an $\s^*(\w)$ in offline experiments, or we directly observe $\s^*(\w)$ based on the creators' actual responses over a period of time for online experiments. An straightforward approach to solve Eq \eqref{eq:bi-level-opt} is to use an iterative method to dynamically adjust $\w$, and the main challenge is to pin down an improving direction of $\w$. Ideally, we can apply first-order optimization if an estimation of the gradient $\frac{d W}{d \w}$ is available. However, the interplay between $\w$ and $\s^*(\w)$ is generally intractable to analyze and we have to resort to heuristic methods. To get an intuitive idea about an improving direction of $\w$, we consider a stylized setting where the user population is perfectly separated and the relevance function is given by dot-product $\sigma(\s,\x)=\s^{\top}\x$. In such a structured environment, the following theorem reveals a useful principle for finding an improving direction of $\w$.

 \begin{theorem}\label{thm:opt}
  When the number of creators $n$ is large enough and the user population $\X$ is a uniform distribution over an orthogonal basis in $\RR^d$, updating $\w$ with the following formula guarantees an improvement in $W$ defined in Eq \eqref{eq:welfare_func}:
  \begin{equation}\label{eq:84}
    w'_j = w_j \cdot e^{-\eta \bar{\pi}(\x_j)}, \forall j\in [m],
\end{equation}
where $\eta$ is a small scalar denoting the learning rate, and $\bar{\pi}(\x_j)$ is the expected utility of user $\x_j$ at $\s^*(\w)$.
 \end{theorem}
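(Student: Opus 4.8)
\begin{proofsketch}
The plan is to work in the fluid regime justified by ``$n$ large enough,'' where the equilibrium is described by the \emph{mass} $n_j\ge 0$ of creators who target user $\x_j$, with $\sum_j n_j = n$. Since the users form an orthogonal basis and $\sigma(\s,\x)=\s^\top\x$, a creator who wants relevance to a chosen user should place $\s_i=\x_j$ (full targeting, relevance $1$), while every non-targeting creator contributes relevance $0$ to $\x_j$. Writing $\rho=e^{1/\beta}$ and $\delta=\rho-1>0$, the top-$K$ softmax then gives each of the $n_j$ targeting creators matching probability $P(n_j)=\rho/(K+n_j\delta)$ in the unsaturated range $n_j\le K$, so under UIR each such creator earns per-user utility proportional to $w_j P(n_j)$. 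First I would characterize $\s^*(\w)$ as the crowding/Wardrop allocation equalizing this utility across all occupied users, $w_j P(n_j)=\lambda$; solving this with $\sum_j n_j=n$ yields the clean relation $D_j\triangleq K+n_j\delta\propto w_j$. In the same regime the served relevance of $\x_j$ is $\bar\pi(\x_j)=g(n_j)\triangleq n_j P(n_j)=n_j\rho/D_j$, an increasing, concave, saturating function, and $W=\tfrac1m\sum_j g(n_j)$.

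Next I would treat the update as a one-parameter family $w_j(\eta)=w_j e^{-\eta\bar\pi(\x_j)}$ and differentiate the equilibrium map at $\eta=0$. Taking logs of $w_j P(n_j)=\lambda$ and differentiating gives, with $a_j\triangleq-P'(n_j)/P(n_j)=\delta/D_j>0$,
\begin{equation*}
  \dot n_j=\frac{\bar g-g(n_j)}{a_j},\qquad \bar g\triangleq\frac{\sum_k g(n_k)/a_k}{\sum_k 1/a_k},
\end{equation*}
where the constant $\bar g$ is pinned down by $\sum_j\dot n_j=0$. This already captures the intended mechanism: users whose served relevance is below the weighted average gain creator mass, over-served users lose it. Substituting $g'(n_j)=\rho K/D_j^2$ into $\tfrac{dW}{d\eta}\big|_0=\tfrac1m\sum_j g'(n_j)\dot n_j$ and using $n_j=(D_j-K)/\delta$ reduces the claim $\tfrac{dW}{d\eta}\big|_0\ge0$ to the scalar inequality
\begin{equation*}
  m\sum_j x_j\;\le\;\Big(\sum_j x_j^2\Big)\Big(\sum_j \tfrac1{x_j}\Big),\qquad x_j\triangleq 1/D_j>0.
\end{equation*}
This I would prove by combining Cauchy--Schwarz, $\sum_j x_j^2\ge\tfrac1m(\sum_j x_j)^2$, with AM--HM, $(\sum_j x_j)(\sum_j 1/x_j)\ge m^2$; multiplying gives the right-hand side $\ge m\sum_j x_j$, with equality iff all $D_j$ (equivalently all $n_j$, all $\bar\pi(\x_j)$) coincide. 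Hence $\tfrac{dW}{d\eta}\big|_0\ge0$, and a first-order Taylor expansion gives $W(\s^*(\w(\eta)))\ge W(\s^*(\w))$ for all sufficiently small $\eta>0$, strictly unless the served relevance is already equalized; since the equilibrium depends only on the ratios $w_j/\sum_k w_k$, a uniform rescaling is neutral and the gain comes purely from redistributing weight away from over-served users.

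The main obstacle is not the closing inequality (which is elementary) but establishing and differentiating the map $\w\mapsto\s^*(\w)$: I must argue that in the large-$n$ fluid limit the LNE reached by LBR coincides with the utility-equalizing crowding allocation, that it varies smoothly with $\w$ (an implicit-function/sensitivity argument yielding the formula for $\dot n_j$), and that we stay in the unsaturated, all-users-served window $0<n_j\le K$ where the closed forms for $P$ and $g$ hold. This is precisely where the hypotheses that $\X$ is an orthogonal basis and that $n$ is large enough do the work.
\end{proofsketch}
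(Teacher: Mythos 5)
Your route is genuinely different from the paper's, and its internal computations check out. The paper linearizes directly in strategy space: it writes the first-order conditions $F(\s^*(\w),\w)=0$ at the LNE, invokes the implicit function theorem to get $d\s^*/d\w$, reparametrizes $w_j\mapsto e^{w_j}$, and computes every block of $\frac{dW}{d\w}=-\frac{dW}{d\s}(\frac{\partial F}{\partial\s})^{-1}\frac{\partial F}{\partial\w}$ under the large-$n$ approximation $A_{ij}/M_j=o(1)$ (dropping quadratic terms) and $\beta^{-1}\gg 1$; the punchline is the identity $\frac{dW}{dw_j}\approx-\frac{1}{m}\bar\pi(\x_j)$ in log-weight coordinates, so Eq.~\eqref{eq:84} is literally approximate gradient ascent on $W$. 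You instead collapse the equilibrium to a Wardrop-type allocation of creator mass $n_j$ over basis users, differentiate the indifference condition $w_jP(n_j)=\lambda$ in the update parameter $\eta$, and close with Cauchy--Schwarz plus AM--HM. I verified your reduction: with $x_j=1/D_j$ one gets $\frac{dW}{d\eta}\big|_0$ equal to a positive multiple of $\sum_j x_j^2-m\sum_j x_j/\sum_j x_j^{-1}$, which your two inequalities make nonnegative, with equality iff all $n_j$ coincide. Your version buys two things the paper's does not: it retains the top-$K$ truncation with $K<n$ (the paper's derivation silently sums the softmax over all $n$ creators, i.e.\ $K=n$), and it exposes the mechanism as a variance statement---the update strictly helps unless all users are already equally served. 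The paper's version buys the full gradient $\frac{dW}{d\w}$ rather than nonnegativity of a single directional derivative.

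The one substantive gap is the equilibrium characterization, which you flag yourself but which deserves sharper treatment because your model asserts strictly more than the paper's. Two steps need justification: (i) full cornering, $\s_i=\x_j$ --- with $\sigma(\s,\x)=\s^\top\x$ a creator's utility is approximately $\sum_j c_j\exp(\beta^{-1}\s_i^\top\x_j)$ with slowly varying $c_j$, which is convex in $\s_i$ and hence boundary-maximized, but lands on a basis vector only in the $\beta^{-1}\gg1$ regime, so that hypothesis must be invoked explicitly (the paper does so only at its final step); and (ii) global utility equalization $w_jP(n_j)=\lambda$ across occupied users, which is a Wardrop condition strictly stronger than an LNE --- a local equilibrium reached by LBR could leave users unoccupied or sustain unequal utilities across basins that no local move bridges, and your inequality argument silently assumes the all-users-served, unsaturated window $0<n_j\le K$. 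The paper's own proof rests on an analogous unproven smoothness-and-approximation step, so you are not worse off in rigor, but to put your sketch on the same footing you should state (i) and (ii) as explicit assumptions or derive them in the joint $\beta\to0$, $n\to\infty$ limit.
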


By the definition in Eq \eqref{eq:creator_utility_func}, rescaling each $w_j$ by a constant does not alter the nature of problem in Eq \eqref{eq:bi-level-opt}. Therefore, the insight conveyed by Eq \eqref{eq:84} is clear: if a user enjoys a high expected utility under the current content distribution, the platform should reduce her weight when rewarding creators. Conversely, if a user's expected utility is relatively low, the platform needs to highlight her significance for motivating a larger set of creators to develop content that caters to the needs of this user. Despite the fact that Eq \eqref{eq:84} is derived from a significantly simplified user distribution, we will leverage it as a foundational element in the development of our adaptive reweighing algorithm and demonstrate in our experiments that this simple heuristic works pretty well for real user distributions.

Next, we formally introduce our proposed adaptive reweighting algorithm for optimizing the intervention mechanism $\w$. Each user $\x$ is initially assigned a unit weight $\w^{(0)}(\x)=1$. During subsequent iterations, the platform continuously monitors the average utility of user $\x$, denoted as $\bar{\pi}(\x)$, within a specified time window, and updates $\w$ according to the following \eqref{eq:reweight}, where $\alpha>0$ is a tunable parameter. This adjustment process employs the meta-algorithm structure of multiplicative weight update method \citep{arora2012multiplicative}.
\begin{equation}\label{eq:reweight}
    w^{(i+1)}(\x) \propto w^{(i)}(\x)\cdot \exp(-\alpha \bar{\pi}(\x)).    
\end{equation}

In practice, we can choose the user utility function $\pi(\x)$ as the metric used for defining the user welfare function Eq \eqref{eq:welfare_func_general}. Up to this point, our discussion has primarily focused on the assumption that $\pi(\x;\s)\propto \sigma(\s_i,\x)$. However, it is important to highlight that $\pi$ in Eq \eqref{eq:reweight} can also take alternative forms to optimize empirical performance. For instance, it can be a function of any numerical measurement related to user satisfaction (e.g., click-through rate). To reduce the dimension of the user weight vector and enhance the robustness of weight updates, we recommend that algorithm designers pre-cluster users into $L$ groups based on their static features so that users within the same group maintain identical weights. The platform's intervention strategy is thus parameterized by an $L$-dimensional vector, $\w=(w_1,\cdots,w_L)$, with each entry denoting the weight assigned to the corresponding user group. 

For a fixed time horizon $T$ in which the platform plans to perform intervention, the platform divides the horizon into $E$ epochs, each with an equal length of $M$ (i.e., $T=EM$). At the start of each epoch $e$, the platform commits to a weight vector $\w^{(e)}$ and deploy it to one of the intervention mechanisms UIR, SMT or HMT. After that, the platform observes and records the sequence of creators' strategic responses, denoted as $\{\s^{(e,i)}\}_{i=1}^M$ from the online environment. Subsequently, the algorithm estimates the average user welfare $\bar{\pi}_l$ for each group $l$. 
It then employs values in $\{\bar{\pi}_l\}_{l=1}^L$ to update the weights at the beginning of the $(e+1)$-th epoch using Eq \eqref{eq:reweight}. To prevent $\w$ from growing or declining excessively, after each update we first normalize and then clip its values within a predetermined interval $[w_{\min},w_{\max}]$. The formal description of this process is presented in Algorithm \ref{alg:ada_reweight}. The implementation details for the deployment of UIR, SMT and HMT in Line 4 are deferred to Appendix \ref{app:implement_mechanisms}.

\vspace{-1mm}
\begin{algorithm}[h]
   \caption{Adaptive Reweighting}
   \label{alg:ada_reweight}
\begin{algorithmic}[1]
   \STATE {\bfseries Input:} Number of epochs $E$, Epoch length $M$, Initial strategy profile $\s^{(0)}$, learning rate $\eta$, temperature parameter $\alpha$, user groups $(G_1, \cdots, G_L)$, clipping constant $w_{\min},w_{\max}$.
   \STATE {\bfseries Initialization:} Initial weight $\w^{(0)}=(w_1^{(0)}, \cdots, w_L^{(0)})$.
   \FOR{$e=0$ {\bfseries to} $E$}
        \STATE Deploy the weight $\w^{e}$ using UIR (Eq \eqref{eq:augmented_utility_UIR}), SMT (Eq \eqref{eq:augmented_utility_SMT}) or HMT (Eq \eqref{eq:augmented_utility_HMT}).
        \STATE Observe creators' strategy sequence $\{\s^{(e,i)}\}_{i=1}^M$.
        \STATE Compute the average user utility for each group 
        $$\bar{\pi}_l = \frac{1}{M|G_l|}\sum_{\x \in G_l}\sum_{i=1}^M \pi(\x;\s^{(e,i)}).$$ 
        \STATE Update $w^{(e+\frac{1}{3})}_l = w^{(e)}_l\cdot \exp(-\alpha \bar{\pi}_l), l\in [L]$. 
        \STATE Normalize $w^{(e+\frac{2}{3})}_l = L\cdot w^{(e+\frac{1}{3})}_l / \sum_{j=1}^L w^{(e+\frac{1}{3})}_j, l\in [L]$. 
        \STATE Clip $\w^{(e+1)} = $ {\bf Clip}$(\w^{(e+\frac{2}{3})}, w_{\min},w_{\max})$. 
        \STATE Set $\s^{(e+1)}=\s^{(e,M)}$.
   \ENDFOR
\end{algorithmic}
\end{algorithm}
\vspace{-1mm}

\section{Experiments}

In this section, we evaluate our proposed intervention mechanisms on both offline datasets and an online environment on a leading short-video recommendation platform in the industry. 

\subsection{Experiments on Offline Data}\label{sec:syn_env}

We conduct simulations on \cgame{} game instances constructed from synthetic data and MovieLens-1m dataset \citep{harper2015movielens}. In the following, we first introduce the specification of these two simulation environments and then report the results. 

\subsubsection{Synthetic environment}
For the synthetic environment, we first construct the user population as follows: we fix an embedding dimension $d=5$ and independently sample $10$ cluster centers, denoted as $\{\c_1,\cdots,\c_{10}\}$, from the unit sphere $\mathbb{S}^{d-1}$. For each center $\c_i$, we generate users belonging to cluster-$i$ by independently sampling from a Gaussian distribution $\N(\c_i, 0.5^2 I_d)$. The sizes of the $10$ user clusters are denoted by a vector $\z=10\times(100, 50, 20, 10, 10, 5, 2, 1,1,1)$. In this manner, we generate a population $\X$ of size $m=2000$. 
The number of creators is set to $n=200$, and each action set $\S_i$ is set to the unit ball in $\RR^d$. The user utility and relevance score function are set to $\pi(\s,\x)=\sigma(\s,\x)=\max\{1-\|\s-\x\|/3, 0\}$. We set $(\beta, K)$ to $(0.1, 20)$ by default. Such synthetic datasets characterize a class of clustered user preference distributions (e.g., majority vs., minority user groups).

On the creators' side, we let their initial strategies to be close the center of the largest user group. This environment models a situation where creators tend to chase popular trends by exclusively producing content tailored to the taste of the largest user group. We aim to investigate whether our proposed mechanisms can assist the platform to escape from such sub-optimal states.

\begin{figure*}[t]
\begin{subfigure}{0.31\textwidth}
\includegraphics[trim=0 0 0 0, clip,width=\textwidth]{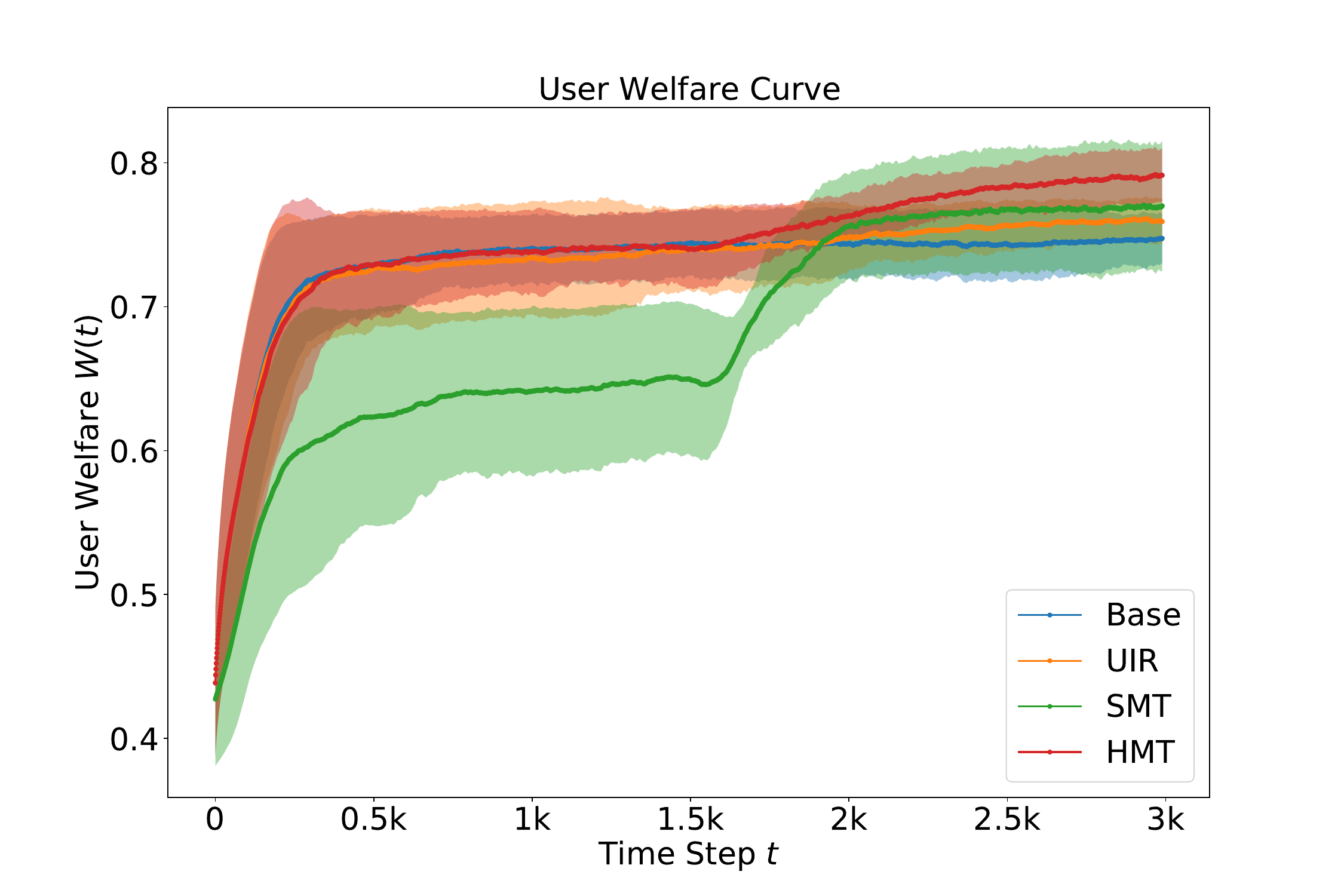}
\caption{User welfare evolving curve.}
\label{fig:welfare_curve_1}
\end{subfigure}
\begin{subfigure}{0.31\textwidth}
\includegraphics[trim=0 0 0 0, clip,width=\textwidth]{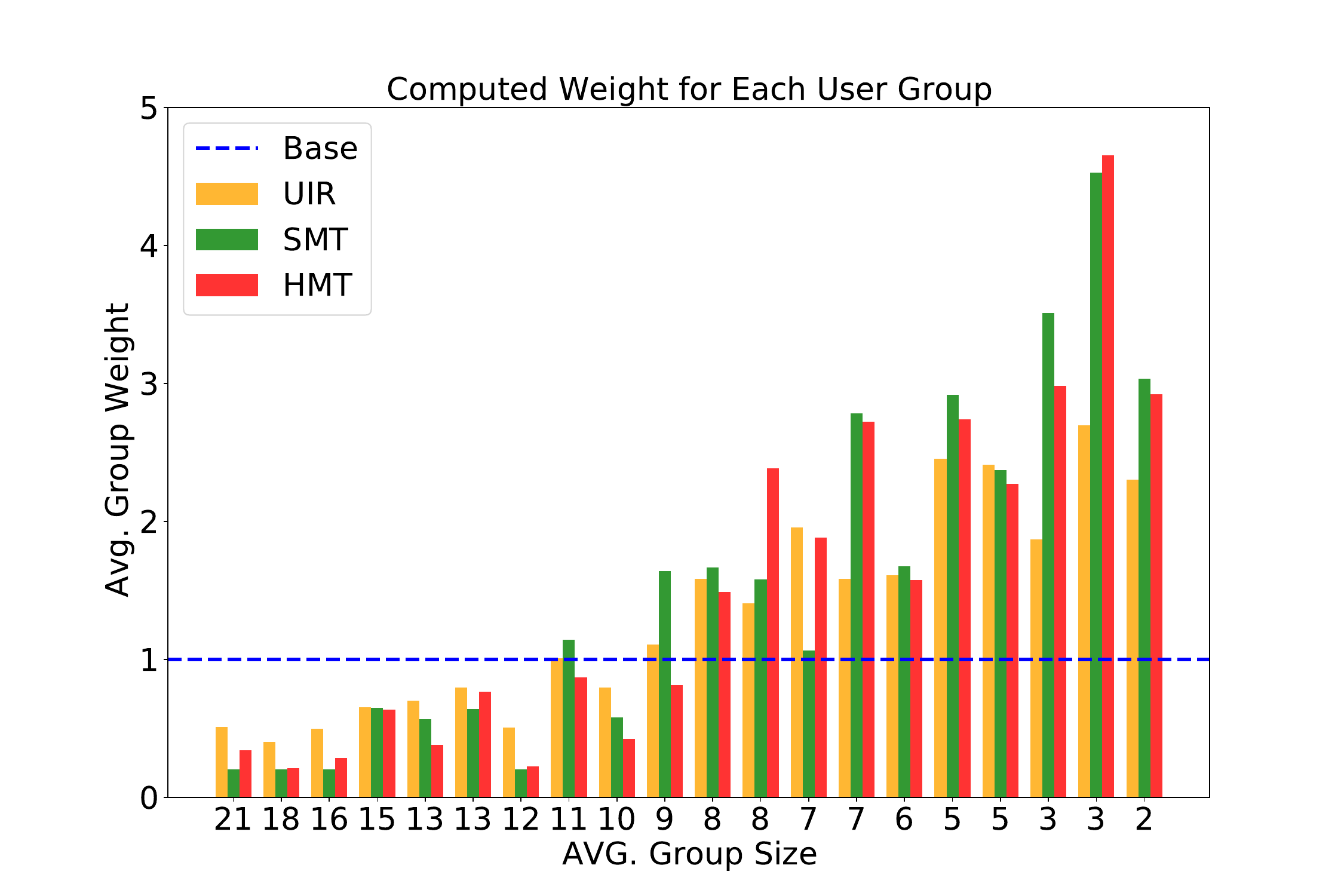}
\caption{Avg. group weights.}
\label{fig:weight_group_1}
\end{subfigure}
\begin{subfigure}{0.31\textwidth}
\includegraphics[trim=0 0 0 0, clip,width=\textwidth]{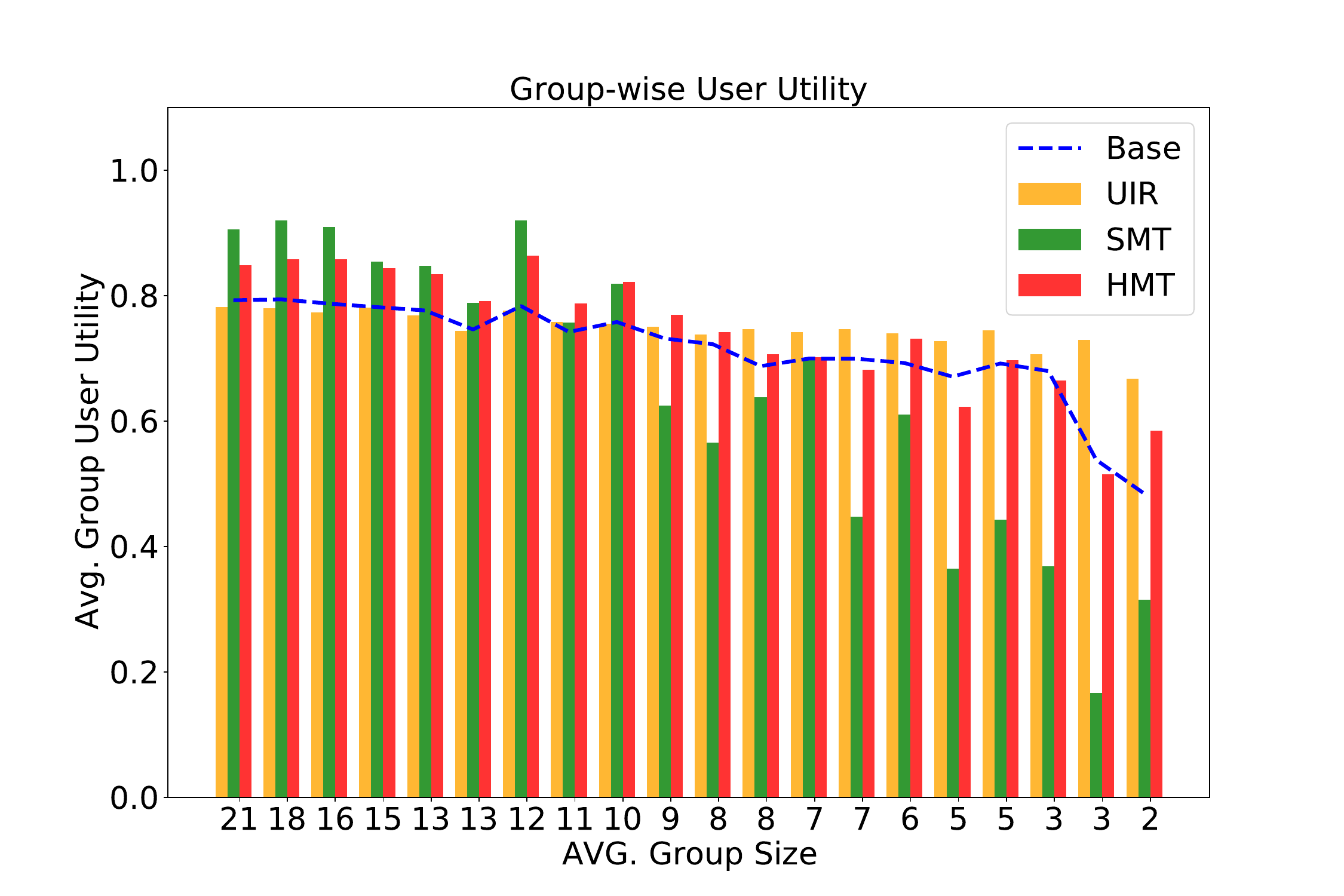}
\caption{Avg. group utilities.}
\label{fig:utility_group_1}
\end{subfigure}
\vspace{-3mm}
\caption{Performance of UIR, SMT and HMT on synthetic dataset against the no-intervention baseline. Results are averaged over 10 independently sampled synthetic environments including one-sigma error bars. $x$-axis: group sizes divided by 10.}
\label{fig:result_1}
\vspace{-3mm}
\end{figure*}

\begin{figure*}[h]
\begin{subfigure}{0.31\textwidth}
\includegraphics[trim=0 0 0 0, clip,width=\textwidth]{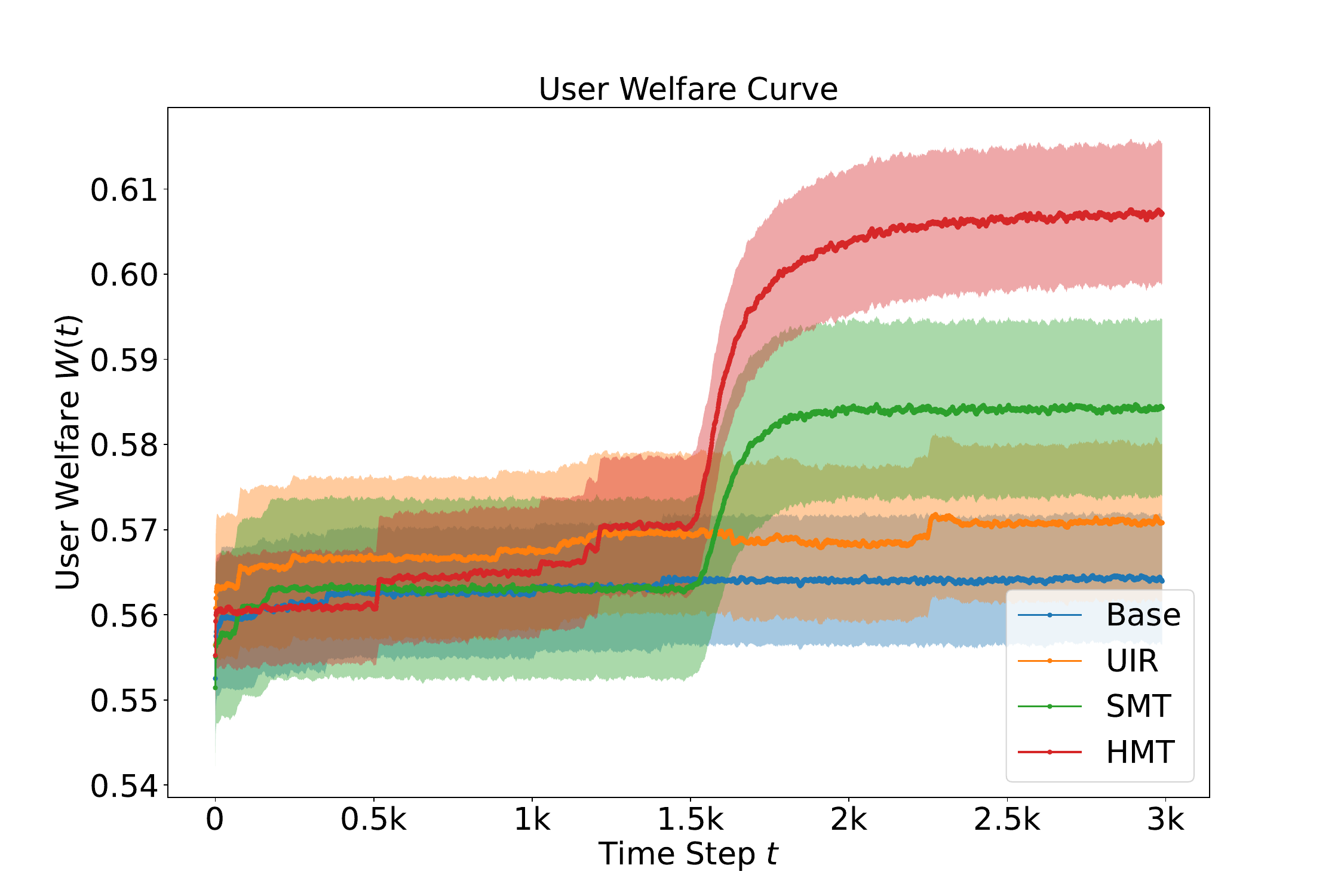}
\caption{User welfare evolving curve.}
\label{fig:welfare_curve_2}
\end{subfigure}
\begin{subfigure}{0.31\textwidth}
\includegraphics[trim=0 0 0 0, clip,width=\textwidth]{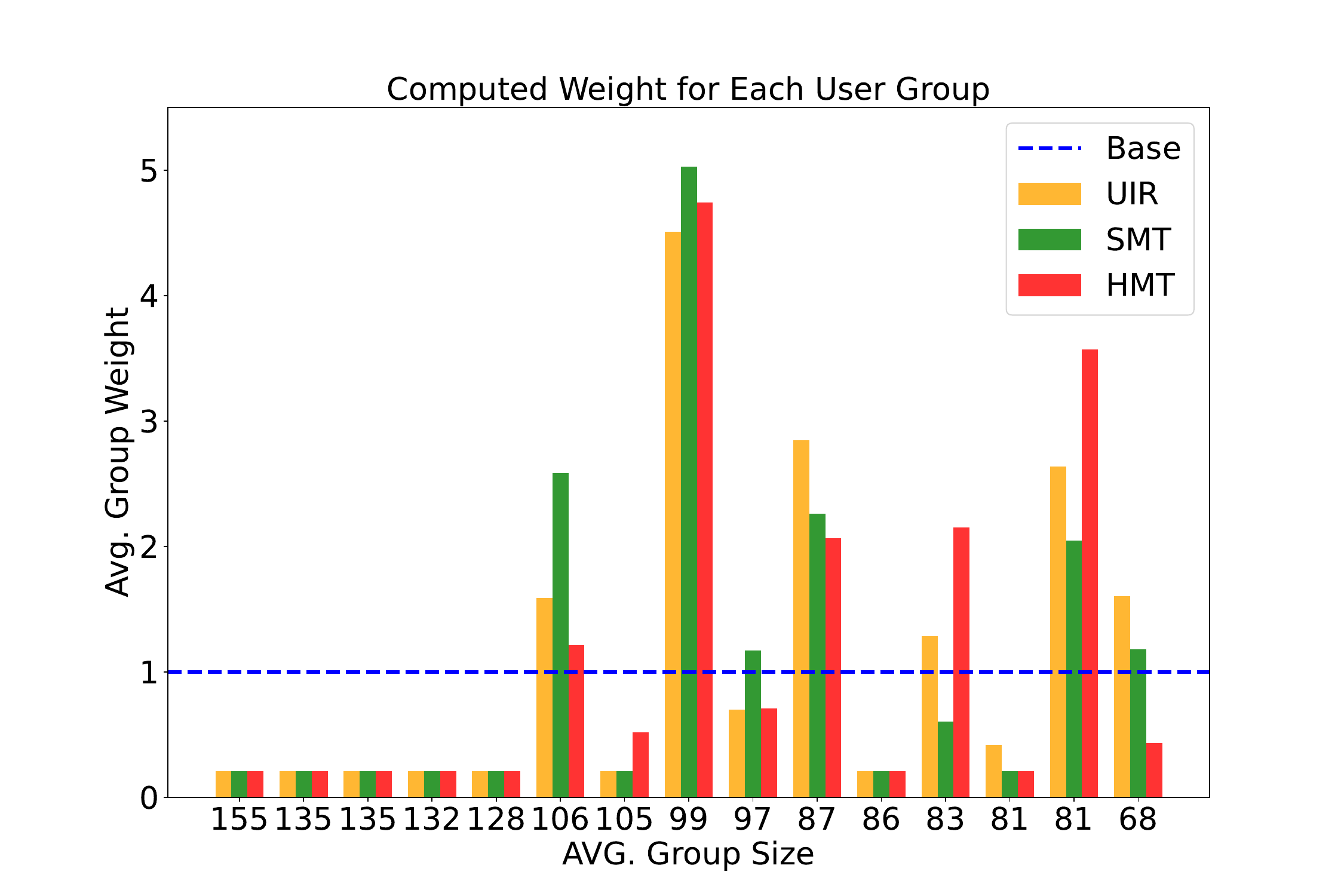}
\caption{Avg. group weights.}
\label{fig:weight_group_2}
\end{subfigure}
\begin{subfigure}{0.31\textwidth}
\includegraphics[trim=0 0 0 0, clip,width=\textwidth]{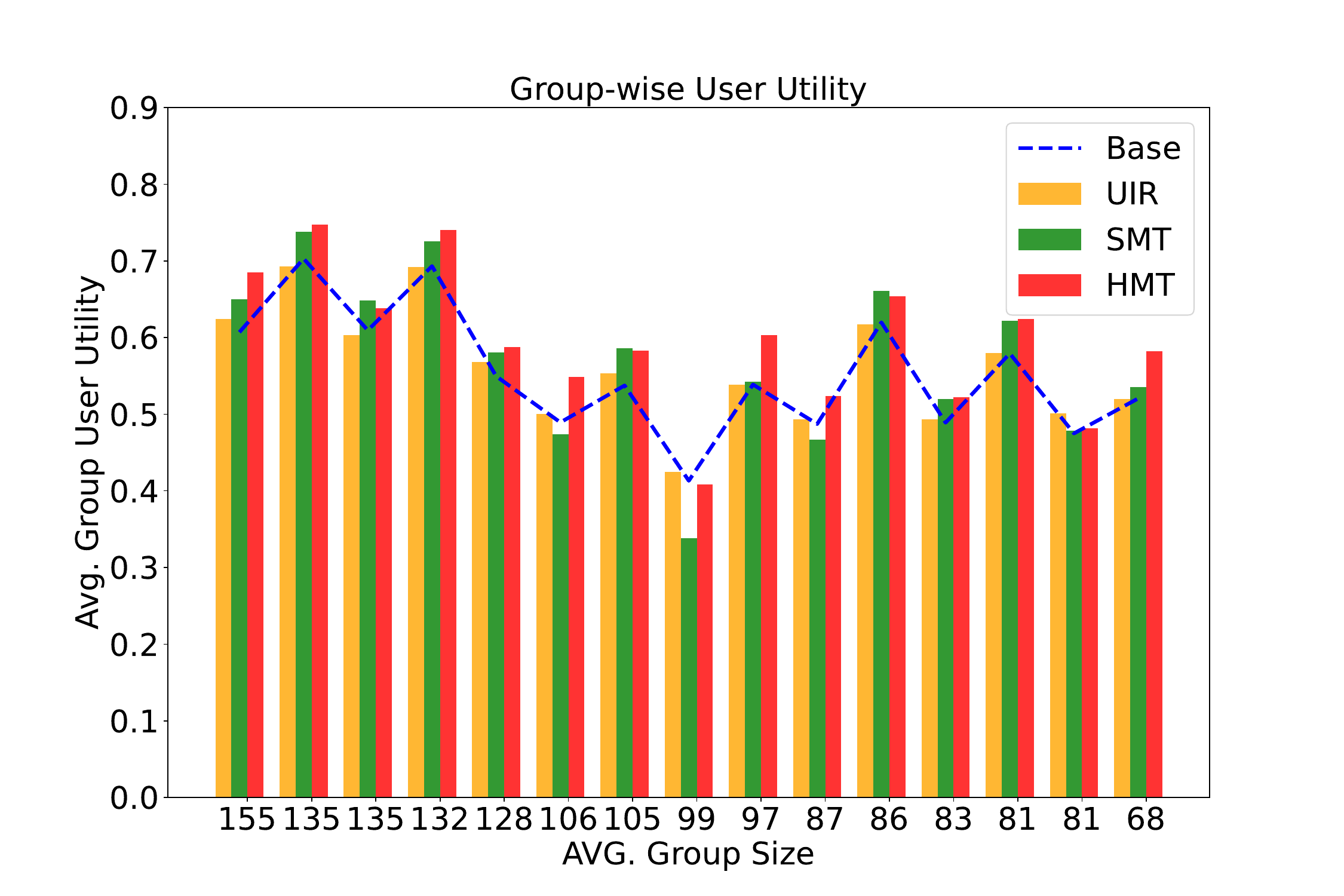}
\caption{Avg. group utilities.}
\label{fig:utility_group_2}
\end{subfigure}
\vspace{-3mm}
\caption{Performance of UIR, SMT and HMT on MovieLens-1m dataset against the no-intervention baseline. Results are averaged over 10 independent simulations including 0.2-sigma error bars. }
\label{fig:result_2}
\vspace{-3mm}
\end{figure*}

\subsubsection{Environment constructed from MovieLens-1m}

We use deep matrix factorization \citep{fan2018matrix} to train user and movie embeddings (with dimension set to $32$) by fitting the observed ratings in the range of 1 to 5. To ensure the quality of the trained embeddings, we performed a 5-fold cross-validation and obtained an averaged RMSE $=0.739$ on the test sets. Then with the same hyper-parameter settings, we train the user/item embeddings with the complete dataset. 

We select active users with more than 200 ratings, resulting in a population $\X$ comprising $1578$ users. We set the number of creators to $20$, with each creator's action set $\S_i$ consisting of $1000$ different movies. All $\{\S_i\}$ share a common part -- the most popular 700 movies based on the number of ratings they received, and each $\S_i$ also has a private part -- a randomly sampled 300 movies. Our choice of the user utility and matching score functions is $\pi(\s,\x)=\sigma(\s,\x)= \s^\top \x$, and then normalized to the region $[0,1].$ Additionally, we set $(\beta, K)=(0.1, 20)$ and initialize creators' strategies to the most preferred movie among all users (i.e., the movie that enjoys the highest average rating among $\X$). 

\subsubsection{Configurations of adaptive reweighting algorithm and intervention mechanisms}
For the adaptive reweighting algorithm, we set the epoch length $M=5$ and the simulation time horizon $T=3000$ for both environments. During each time step within an epoch, we simulate creators' responses by letting each of them update her strategy once using Algorithm \ref{alg:creator_dynamic} in a random order. Creators' learning rate is set to $\eta=0.2$. On the platform side, we use $K$-means clustering to determine user groups and set the number of clusters to $20$ for synthetic environment and $15$ for MovieLens environment, respectively. We should note as in practice, even the system does not have the exact knowledge about user distribution, we do not use the ground-truth clustering of users set in the simulation. In addition, we set the temperature parameter $\alpha=0.5$ for the first half of the time period and reduce it to $0.1$ for the remaining period. The clipping constants are set to $(w_{\min}, w_{\max})=(0.2, 5.0)$ and the mapping used in SMT and HMT are set to $\beta(\x)=\beta\cdot w(\x)$ and $K(\x)=\lceil K\cdot w(\x)\rceil$.


\subsubsection{Results}\label{sec:offline_result}

Figure \ref{fig:welfare_curve_1} illustrates the user welfare resulted from creators' evolving strategies under the three intervention mechanisms: UIR, SMT, and HMT, compared to the baseline (no platform intervention). Over time, all three mechanisms consistently outperform the baseline. In the baseline (shown in blue), the welfare plateaus quickly and remains stagnant. Conversely, the welfare curves under the other mechanisms exhibit ``double-ascent" patterns. Initially, they also plateau, but eventually, they begin to rise again and surpass the baseline. This is because, without platform's intervention, creators tend to remain in sub-optimal equilibria as illustrated in Section \ref{sec:failure}. However, our proposed mechanisms gradually accumulate user group weights, which, when significant enough, encourage creators to explore unattended user groups, leading to increased welfare. Among the three mechanisms, HMT demonstrates the most substantial gain with the least variance. UIR, while showing a lower marginal gain, maintains stability with minimal variance. SMT, which achieves a moderate gain, exhibits higher variance, suggesting that directly manipulating the matching temperature may be overly aggressive.

Figure \ref{fig:weight_group_1} shows the learned group weights at the last iteration of simulation. As it demonstrates, all three mechanisms emphasize on small groups over larger ones. This outcome aligns with our expectation: on one hand, larger user groups are more likely to ``trap'' unnecessarily many creators and thus should be deprioritized; on the other hand, increasing weights of niche user groups also improve their chances of being discovered by more creators. 

Figure \ref{fig:utility_group_1} breaks down the average utilities across user groups. The blue dashed line (i.e., the no-intervention baseline) exhibits a positive correlation between averaged group utility and group size, mirroring real-world observations. The orange bars show that UIR strikes a balance by improving the utility of niche groups while slightly trading off utility in larger groups. HMT achieves a remarkable Pareto improvement across all groups, as indicated by the red bars. However, SMT's gains come at the cost of even greater skewness in the average utility distribution across groups.

To summarize, all three mechanisms show promising improvements in overall user welfare, but their nature of gains differs, introducing considerations for the platform. When condition allows, HMT is the top choice due to its strong performance, stability, and fairness. For platforms that prioritize fairness and stability, UIR is also a viable option. However, SMT, despite improving overall welfare, may suffer from potential drawbacks such as instability and fairness issues. In-depth analysis of the merits and limitations of these mechanisms remains a topic for future research.

The results in the MovieLens environment align with the insights from the synthetic environment (refer to Figure \ref{fig:result_2}). However, it's worth noting that the trends in learned group weights and realized group utilities do not always align with group size, which is expected in real-world data where unattended user groups may not necessarily have small sizes. Nevertheless, our proposed mechanisms continue to improve overall welfare by identifying and prioritizing these groups.

\subsection{Online Experiments}

We conducted online evaluations on one of the world's leading short-video content creation and recommendation platforms (referred to as the ``platform'' hereafter due to the anonymity requirement), spanning over 3 weeks. We observe that the platform's intervention can indeed influence creator behavior because, on average, there is a positive correlation between the delivery volume and content creation volume for each topic. (The Pearson correlation is 0.2, and there are hundreds of topics in total). In this experiment, we employed the ``like-through-rate" (LTR) as the user utility function. LTR is calculated as the ratio of total likes to the number of impressions of a specific short video. We opted for LTR as the chosen metric because it not only serves as a reflection of user satisfaction but also offers a straightforward and easily interpretable signal for content creators to assess their content's perceived quality. The selection of the HMT mechanism for testing was deliberate, driven not only by its strong performance against the baseline and other mechanisms in our offline experiments, but also due to its ease of integration into production: HMT solely requires changing the number of candidate content retrieved for different users within the deployed relevance-based ranking model.

\subsubsection{Experiment Setups}  
We list the experiment setups below.

\noindent
{\bf User clustering:} We utilized explicit user characteristics such as demographics including country and gender and their level of activeness including video consumption volume and watch time. This approach led to the creation of over 10,000 user groups and we retained groups that had a sufficient number of users, resulting in hundreds of user groups. 

\noindent
{\bf Cluster weight update:} We implemented a daily weight updating cadence. Each day, we assessed the satisfaction of every user group by calculating the relative change of LTR over its average in the previous two days. Subsequently, we recalculated the user weights in accordance with the method outlined in Algorithm \ref{alg:ada_reweight}. 

\noindent
{\bf A/B test configurations:} To evaluate changes in both user and creator behavior, we employed a symmetric A/B test setup on the platform. This symmetric A/B test consisted of an experiment arm and a control arm to measure performance. At the beginning, we randomly pair 3\% creators with 3\% users from entire platform for each arm. Under this setup, users within each arm exclusively received content created by creators within the same arm, and content created by these creators was exclusively exposed to users within the same arm throughout the testing period. This stringent separation prevents any cross-group treatment leakage and maintains a closed feedback loop within each arm. In our online experiment, we ran these two arms for a duration of 3 weeks: a control arm adhering to the existing production setup and a test arm where we applied our proposed mechanism, HMT. 

\noindent
{\bf HMT specifics:} We implemented HMT during the cold start content retrieval phase, which pertains to content created within a few days and has not yet garnered a predefined number of impressions. Specifically, within the platform's production pipeline, we integrated an audience matching stage to retrieve cold start content. 
During this stage, content is exclusively delivered to the most suitable user candidates based on relevance scores generated by a pre-trained model. In the existing production setup, a fixed relevance score percentile of 99\% is uniformly applied to all users. This means that every user is only matched with the top 1\% of cold-start content in terms of relevance scores to ensure a high level of personalization. When tuning the percentile, we typically observe a trade-off between overall user satisfaction and the volume of cold-start content. In our experiment, we leveraged HMT to intelligently adjust this threshold for different user groups, anticipating improvements in both of thees metrics. Consequently, user groups with higher weights were granted a higher chance to be selected by content creators, while those with lower weights were deprioritized. The mapping from the group weight $w$ to the percentile of retrieved cold start content proportion was designed as a piece-wise constant function, with details specified in Table \ref{tab:w_mapping}.

\begin{table}
\centering
  \caption{Mapping $g$ in HMT}
  \label{tab:w_mapping}
  \begin{tabular}{cccccccc}
    \toprule
    Weight & $<1.0$ & $<1.19$ & $<1.79$ & $<2.13$ & $<2.36$ & $<2.68$ & $\geq2.68$\\
    \midrule
    Percentile & 0.99 & 0.95 & 0.90 & 0.85 & 0.75 & 0.7 & 0.1 \\
  \bottomrule
\end{tabular}
\end{table}

\subsubsection{Results}

Positive results were obtained in three key aspects. 

\noindent
{\bf User-side engagement:} The core utility metric LTR increased by $1.13\%$ and the total impression number of cold-start content increased by $0.76\%$, leading to a $3.7\%$ increase in impressions for fresh content created within 2 hours. These improvements are statistically significant and demonstrate increased user welfare while enhancing the freshness and diversity of content. The gains in both user satisfaction and the volume of cold-start content indicate that HMT influenced many creators to produce more targeted content that benefits niche user groups. Table \ref{tab:group_gain} provides a breakdown of performance improvement per user group. We indexed all groups in descending order by their sizes and divided them into four columns, with each column constituting approximately 25\% of the total traffic during the experiment period. As shown, smaller groups enjoyed a higher gain in terms of LTR, which echoes the observations in offline results. The gain in cold-start content impression volume shows an opposite trend. This is because the absolute number of cold-start impressions for larger user groups was smaller as the distribution of relevance scores in this group was more skewed, resulting in a larger relative gain in this metric. 

\noindent
{\bf Content diversity:} The average number of consumed topics per user during the experimental period increased by 0.71\%, and this increase is also statistically significant. 

\noindent
{\bf Creator-side engagement:} For popular creators (those with more than 1000 followers), the number of daily active users (Creator DAU) increased by an average of 0.17\%, while for the remaining creators, the gain is 0.06\%. Additionally, there is a promising increasing trend in Creator DAU for popular creators over the three weeks of the experiment: the increases over the first, second, and third weeks are -0.2\%, 0.24\%, and 0.48\%. This suggests that the three-week duration of the experiment may have been too short to influence the majority of creators to respond accordingly, and more time may be needed to fully observe the positive feedback from creators.

\begin{table}
\centering
  \caption{Gains per User Group}
  \label{tab:group_gain}
  \begin{tabular}{cccccccc}
    \toprule
    User Groups & 1-5 & 6-20 & 21-74 & 75+ & TOTAL \\
    \midrule
    LTR &        +0.43\% & +1.40\% & +0.75\% & +1.36\% & $\bm{+1.13\%}$  \\
    Impression & +2.64\% & +0.62\% & +1.42\% & +0.11\% & $\bm{+0.76\%}$ \\
  \bottomrule
\end{tabular}
\end{table}

\section{Conclusion}

In this study, we tackle the user welfare optimization challenge faced by online content recommendation platforms through the lens of mechanism design. We identified myopic strategy updates among creators caused by their limited information access as the culprit of sub-optimal welfare and introduced platform interventions to address this issue. Our three proposed mechanisms, based on adaptive user importance reweighting, enable platforms to convey global user preference information, reshape creators' perceived utilities, and influence their behaviors. Empirical experiments in both offline and online environments demonstrated the effectiveness of our approach, highlighting its potential for practical impact.

For future work, there remains an intriguing need for a comprehensive understanding of the merits and limitations of UIR, SMT, and HMT to aid practitioners in selecting the most suitable mechanism for real-world applications. It is also important to address practical constraints when applying the developed mechanisms. For instance, can we find ways to jointly optimize user welfare and platform costs? Can the mechanism explicitly ensure fairness on the user side and producer side? Deeper insights into these questions hold the potential to greatly impact the rapidly evolving online content landscape and industry practices.


\vskip 0.2in

\bibliography{main}

\appendix 
\newpage
\appendix

\section{Details of content creators' strategy update dynamics}\label{app:lbr}

The Local Better Response (LBR) procedure described in Algorithm \ref{alg:creator_dynamic} captures the evolution of creators' strategies in a snapshot, and characterizes two fundamental properties of content creation: 1. it relies solely on point estimations of the utility function (Line 3); and 2. it only incurs local changes at each update (Line 4). At each step, a creator who decides to update her strategy would first generate an exploration direction $\g_i$ (Line 2); then she would evaluate whether adjusting her strategy in this direction results in a higher utility. If so, she proceeds to update her strategy along $\g_i$ in a pace of $\eta$; otherwise, she maintains her current strategy. 

Algorithm \ref{alg:creator_dynamic} closely emulates real-world scenarios where creators strive to optimize their utilities while having merely black-box access to the utility functions. In practice, finding a clear direction that guarantees improved utility can be a challenging and, at times, unrealistic task. Consequently, we model their strategy evolution as an iterative process of trial and error. By definition, when LBR converges in  \cgame{}, it must converge to an LNE. Our primary interest lies in understanding how the platform can devise a dynamic rewarding or matching principle that maximizes cumulative user welfare within a given time period.
\vspace{-2mm}
\begin{algorithm}[h]
   \caption{({\bf LBR}) Local Better Response update at time step $t$}
   \label{alg:creator_dynamic}
\begin{algorithmic}[1]
   \STATE {\bfseries Input:} Learning rate $\eta$, an \cgame{} instance including utility functions and strategy sets $(u_i(\s), \S_i)$ of creator $i$, the joint strategy profile $\s^{(t)}=(\s_1^{(t)}, \cdots, \s_n^{(t)})$ at the current step $t$.
    \STATE Generate a random direction $\g_i \in \mathbb{S}^{d}$.
    \IF{ $u_i(\s_i^{(t)}+\eta \g_i,\s_{-i}^{(t)}) \geq u_i(\s^{(t)})$ }
        \STATE $\s_i^{(t+\frac{1}{2})}=\s_i^{(t)} + \eta \g_i$.
        \STATE Find $\s_i^{(t+1)}$ as the projection of $\s_i^{(t+\frac{1}{2})}$ in $\S_i$.
    \ELSE 
        \STATE $\s_i^{(t+1)}=\s_i^{(t)}$
    \ENDIF       
\end{algorithmic}
\end{algorithm}
\vspace{-3mm}

\section{Implementation Details of UIR, SMT and HMT Mechanisms}\label{app:implement_mechanisms}

The following sub-routine, denoted as Algorithm \ref{alg:sys_intervention_uir}, outlines how the platform deploys the weights obtained from Line 8, Algorithm \ref{alg:ada_reweight} as an intervention mechanism in Line 4. In Algorithm \ref{alg:sys_intervention_uir}, the weight vector $\w$ is directly employed to modify the reward or payment associated with each creator-user interaction. 

\begin{algorithm}[h]
   \caption{UIR Intervention}
   \label{alg:sys_intervention_uir}
\begin{algorithmic}
   \STATE {\bfseries Input:} Default recall capacity $K$, matching temperature $\beta$.
   \FOR{each user request $\x$ }
        \STATE Compute the relevance scores $\{\sigma(\s_i, \x)\}_{i=1}^n$.
        \STATE Retrieve the top-$K$ ranked content $\{s_{l(1)},\cdots,s_{l(K)}\}$ list based on relevance scores and randomly sample one element according to $\text{Softmax}(\{\beta^{-1}\sigma(\s_{l(i)}, \x)\}_{i=1}^K)$.
        \STATE For the user's choice $\s_i$, adjust creator-$i$'s default reward (payment) from $R(\s_i, \x)$ to $w(\x)R(\s_i, \x)$.
   \ENDFOR
\end{algorithmic}
\end{algorithm}

In the case of SMT or HMT intervention types, the platform requires a function to map $w(\x)$ to $\beta(\x)$ or $K(\x)$. This mapping can be implemented as a piecewise constant function and determined empirically. The specifics of this process are elucidated in Algorithm \ref{alg:sys_intervention_smt} and \ref{alg:sys_intervention_hmt}.

\begin{algorithm}[h]
   \caption{SMT Intervention}
   \label{alg:sys_intervention_smt}
\begin{algorithmic}
   \STATE {\bfseries Input:} Default recall capacity $K$, matching temperature $\beta$, $f:\RR_{+}\rightarrow \RR_{+}$.
   \FOR{each user request $\x$ }
        \STATE Compute the relevance scores $\{\sigma(\s_i, \x)\}_{i=1}^n$.
        \STATE Retrieve the top-$K$ ranked content $\{s_{l(1)},\cdots,s_{l(K)}\}$ list based on relevance scores and randomly sample one element according to $\text{Softmax}(\{\beta(\x)^{-1}\sigma(\s_{l(i)}, \x)\}_{i=1}^K)$, where $\beta(\x)=f(w(\x))$.
   \ENDFOR
\end{algorithmic}
\end{algorithm}

\begin{algorithm}[h]
   \caption{HMT Intervention}
   \label{alg:sys_intervention_hmt}
\begin{algorithmic}
   \STATE {\bfseries Input:} Default recall capacity $K$, matching temperature $\beta$, $g:\RR_{+}\rightarrow \NN_{+}$.
   \FOR{each user request $\x$ }
        \STATE Compute the relevance scores $\{\sigma(\s_i, \x)\}_{i=1}^n$.
        \STATE Retrieve the top-$K(\x)$ ranked content $\{s_{l(1)},\cdots,s_{l(K(\x))}\}$ list based on relevance scores and randomly sample one element according to $\text{Softmax}(\{\beta^{-1}\sigma(\s_{l(i)}, \x)\}_{i=1}^K(\x))$, where $K(\x)=g(w(\x))$.
   \ENDFOR
\end{algorithmic}
\end{algorithm}

\section{Proof of Theorem \ref{thm:PNE}}\label{app:proof_pne}

We restate Theorem \ref{thm:PNE} as the following with more rigorous characterizations, and then provide its detailed proof. 
\begin{theorem}
    Any \cgame{} game with $K=n$ has a unique pure Nash equilibrium (PNE) if each creator's srtategy set $\S_i$ is convex and $\sigma(\cdot,\x)$ is twice-differentiable and satisfies 
   \begin{equation}\label{eq:suff_monotone}
       \EE_{\x\sim\X}\left[\frac{\partial^2 \sigma}{\partial \s_i^2}+\Big(\frac{\partial \sigma}{\partial \s_i}\Big)\Big(\frac{\partial \sigma}{\partial \s_i}\Big)^{\top}\right] \preceq 0, \forall i\in[n].
   \end{equation}
\end{theorem}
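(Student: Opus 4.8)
The plan is to recognize the $K=n$ game as a smooth concave game and invoke Rosen's framework for concave games to get existence and uniqueness in one stroke. When $K=n$ the top-$K$ operator is vacuous, so the matching probability collapses to a plain softmax $P_i(\s,\x)=e^{\beta^{-1}\sigma(\s_i,\x)}/\sum_k e^{\beta^{-1}\sigma(\s_k,\x)}$, and each payoff $u_i(\s)=\EE_\x[P_i(\s,\x)]$ (the traffic-reward utility) is smooth on the convex, compact strategy sets $\S_i$ (compact since $\|\s_i\|\le 1$). I would then establish two things. First, that each $u_i$ is concave in its own variable $\s_i$; combined with compact convex strategy sets this yields existence of a PNE (Rosen's existence theorem, or equivalently Debreu--Glicksberg--Fan). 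Second, that the pseudo-gradient $g(\s)=(\nabla_{\s_i}u_i(\s))_{i=1}^n$ is strictly monotone, i.e.\ the symmetric part of its Jacobian is negative definite; this is Rosen's diagonal-strict-concavity condition, and it forces uniqueness, because a strictly monotone variational inequality on a convex set has at most one solution and the first-order conditions coincide with PNE precisely because of the own-concavity from the first step.

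The core computation is the Jacobian of $g$. Using the softmax identities $\partial p_i/\partial(\beta^{-1}\sigma_j)=p_i(\delta_{ij}-p_j)$ with $p_i:=P_i(\s,\x)$, I would derive, for each fixed $\x$, the diagonal blocks
\[
\nabla_{\s_i}\big(\nabla_{\s_i}p_i\big)=p_i(1-p_i)\big[(1-2p_i)\beta^{-1}\nabla\sigma_i\nabla\sigma_i^{\top}+\nabla^2\sigma_i\big]\beta^{-1},
\]
and the off-diagonal blocks $-(1-2p_i)p_ip_j\beta^{-2}\nabla\sigma_i\nabla\sigma_j^{\top}$ for $j\ne i$ (all evaluated at $(\s_i,\x)$, then averaged over $\x$). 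Testing the per-user Jacobian $G^{\x}$ against $\v=(\v_1,\dots,\v_n)$ and writing $t_i=\beta^{-1}\nabla\sigma_i^{\top}\v_i$, the cross terms telescope through the softmax normalization and the quadratic form collapses cleanly to
\[
\v^{\top}G^{\x}\v=\sum_i(1-2p_i)p_i\,t_i(t_i-\bar t)+\beta^{-1}\sum_i p_i(1-p_i)\,\v_i^{\top}\nabla^2\sigma_i\,\v_i,\qquad \bar t=\textstyle\sum_j p_j t_j .
\]
The second sum is $\le 0$ exactly when $\sigma(\cdot,\x)$ is concave, and the first sum is a rank-one form in the $t_i$'s. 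Crudely bounding $|1-2p_i|\le 1$, this first sum is of the order of $\sum_i t_i^2=\sum_i \v_i^{\top}(\beta^{-1}\nabla\sigma_i)(\beta^{-1}\nabla\sigma_i)^{\top}\v_i$, so the full form is dominated by the aggregate matrix $\nabla^2\sigma_i+(\nabla\sigma_i)(\nabla\sigma_i)^{\top}$ that appears in condition \eqref{eq:suff_monotone}; this same matrix governs the own-concavity needed for existence, so both parts reduce to \eqref{eq:suff_monotone}.

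The main obstacle is the first (coupling) sum. A two-player check already shows it equals $(1-2p_1)p_1(1-p_1)(t_1^2-t_2^2)$, which is genuinely indefinite, so own-concavity of $\sigma$ alone does \emph{not} suffice: the extra $(\nabla\sigma)(\nabla\sigma)^{\top}$ term in \eqref{eq:suff_monotone} is indispensable for absorbing these softmax-induced rank-one contributions. The delicate technical point is tracking the state-dependent weights $p_i(1-p_i)$ and $(1-2p_i)$ uniformly over $\s$, so that a pointwise matrix bound can be passed through $\EE_\x$ and over $\S$ to match the aggregate form of \eqref{eq:suff_monotone}. The remaining care is in upgrading negative semidefiniteness to the strict definiteness that uniqueness requires---either by imposing \eqref{eq:suff_monotone} with strict inequality, or by noting the rank-one correction is strict along any direction where $\nabla\sigma_i\ne 0$ together with compactness of $\S$---and in confirming the variational-inequality characterization is equivalent to PNE, which is where the own-concavity step is reused.
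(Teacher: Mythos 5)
Your overall strategy is the paper's: cast the $K=n$ game as a Rosen-style monotone game, compute the Jacobian of the pseudo-gradient, and show its symmetric part is negative definite under condition \eqref{eq:suff_monotone}. But there is a genuine gap at exactly the step you flag as ``delicate,'' and it is the heart of the proof. You propose to control the indefinite coupling sum $\sum_i(1-2p_i)p_i\,t_i(t_i-\bar t)$ by the crude bound $|1-2p_i|\le 1$ and the claim that it is ``of the order of $\sum_i t_i^2$,'' then absorb it into the outer-product term of \eqref{eq:suff_monotone}. This does not close: Cauchy--Schwarz gives $\bigl|\sum_i(1-2p_i)p_i t_i(t_i-\bar t)\bigr|\le 2\sum_i p_i t_i^2$, with an extra factor of $2$ and a weight $p_i$ that does not match the weight $p_i(1-p_i)$ multiplying the Hessian term, nor the coefficient exactly $1$ on $(\nabla\sigma_i)(\nabla\sigma_i)^{\top}$ in the hypothesis (and the powers of $\beta^{-1}$ on the two sums differ as well). ``Dominated up to constants'' is not enough when the sufficient condition you must recover has a fixed coefficient.

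The paper's proof avoids any such bound by an exact algebraic cancellation. It splits the game Hessian as $H_{ij}=-\EE[H^{(0)}_{ii}]\delta_{ij}-\EE[H^{(1)}_{ij}/M^3]$, where the block matrix $[H^{(1)}_{ij}]$ collects \emph{all} the softmax coupling (the off-diagonal blocks together with the matching rank-one piece of the diagonal), and then shows $[H^{(1)}_{ij}]\succeq 0$ identically in $\s$ and $\x$ via the sum-of-squares identity $\t^{\top}\tilde H\t=\sum_{1\le i<j\le n}A_iA_j(\y_i^{\top}\t_i-\y_j^{\top}\t_j)^2\ge 0$ after completing the square with $\frac{1}{M}(M\y-\z)(M\y-\z)^{\top}$. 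Because the coupling block is PSD on its own, the hypothesis \eqref{eq:suff_monotone} is only needed to make the residual diagonal piece $H^{(0)}_{ii}$ positive definite, where the factor $(1-A_i/M)\le 1$ is what lets the coefficient on the outer product be exactly $1$. If you want to complete your write-up, replace the ``crude bounding'' paragraph with this grouping and the sum-of-squares identity; your two-player observation that the coupling sum is indefinite term-by-term is correct but irrelevant once the coupling is recognized as a PSD block in aggregate. (Your remarks on existence via own-concavity plus compactness, and on the semidefinite-versus-definite issue for uniqueness, are reasonable and in fact slightly more careful than the paper, which passes over both points quickly.)
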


\begin{proof}

We prove that under the proposed conditions, the \cgame{} is a strictly monotone game \cite{rosen1965existence} and thus possesses a unique PNE.
According to Appendix A in \cite{bravo2018bandit}, a sufficient condition that establishes strictly monotonicity for any $n$-person game $\G$ is convex action sets and a negative definite Hessian $[H^{\G}_{ij}]$ of $\G$, which is defined as 
    $$H_{ij}(\s)=\frac{1}{2}\nabla_j\nabla_i u_i(\s)+\frac{1}{2}\nabla_i\nabla_j u_j(\s)^{\top}.$$
    
For \cgame{} game, the convexity of strategy sets are satisfied. Next we prove the property of the game's Hessian matrix with associated utility function 
\begin{equation}
    u_i(\s)=\EE_{\x\sim\X} \left[\frac{\exp(\sigma(\s_i, \x))}{\sum_{l=1}^n \exp(\sigma(\s_l, \x))}\right].
\end{equation}
    
Without loss of generality, let $\beta=1$. Denote $A_i=\exp(\sigma(\s_i, \x)), M=A_1+\cdots+A_n$, we have

\begin{align*}
    H_{ii} &= -\EE_{\x\sim\X}\Big\{\Big[-\frac{\partial^2 \sigma}{\partial \s_i^2}-\Big(\frac{\partial \sigma}{\partial \s_i}\Big)\Big(\frac{\partial \sigma}{\partial \s_i}\Big)^{\top}\Big]A_i(M-A_i)\cdot\frac{1}{M^2} \\ &\quad\quad\quad\quad\quad+\frac{2}{M}\Big(\frac{\partial \sigma}{\partial \s_i}\Big)\Big(\frac{\partial \sigma}{\partial \s_i}\Big)^{\top}A_i^2(M-A_i)\cdot \frac{1}{M^2}\Big\} \\ 
    & =-\EE_{\x\sim\X}\Big\{\Big[-\frac{\partial^2 \sigma}{\partial \s_i^2}-\Big(\frac{\partial \sigma}{\partial \s_i}\Big)\Big(\frac{\partial \sigma}{\partial \s_i}\Big)^{\top}(1-\frac{A_i}{M})\Big]\cdot A_i(M-A_i)\frac{1}{M^2}\Big\} \\ &\quad- \EE_{\x\sim\X}\Big\{\Big(\frac{\partial \sigma}{\partial \s_i}\Big)\Big(\frac{\partial \sigma}{\partial \s_i}\Big)^{\top}A_i^2(M-A_i)\cdot \frac{1}{M^3}\Big\} \\
    & \triangleq -\EE_{\x\sim\X}\left[H_{ii}^{(0)}(\s,\x)\right]-\EE_{\x\sim\X}\left[H_{ii}^{(1)}(\s,\x)\frac{1}{M^3}\right].
    \end{align*}

    \begin{align*}
        H_{ij} &= -\EE_{\x\sim\X}\Big\{\Big(\frac{\partial \sigma}{\partial \s_i}\Big)\Big(\frac{\partial \sigma}{\partial \s_j}\Big)^{\top}A_iA_j(M-A_i-A_j)\cdot \frac{1}{M^3} \Big\}\\ 
        & \triangleq - \EE_{\x\sim\X}\left[H_{ij}^{(1)}(\s,\x)\frac{1}{M^3}\right].
    \end{align*}
    
    Next we show that for any $\x$ and $\s$, the block matrix $[H^{(1)}_{ij}]$ is always positive semi-definite (PSD). For simplicity, let
    \begin{align*}
        & \y_i=A_i\frac{\partial \sigma}{\partial \s_i}\in \RR^{d\times 1}, \y=[\y_1;\dots;\y_n]\in\RR^{dn\times 1}, \\ 
        &\z=[A_1\y_1;\dots;A_n\y_n]\in\RR^{dn\times 1},
    \end{align*}
we obtain
\begin{small}
    \begin{align*}
        &[H^{(1)}_{ij}]= \\ &\left [
    \begin{array}{cccc}
        \y_1\y_1^{\top}(M-A_1) & \y_1\y_2^{\top}(M-A_1-A_2) & \cdots & \y_1\y_n^{\top}(M-A_1-A_n) \\
        \y_2\y_1^{\top}(M-A_2-A_1) & \y_2\y_2^{\top}(M-A_2) & \cdots & \y_2\y_n^{\top}(M-A_2-A_n) \\
        \vdots & \vdots & \ddots & \vdots \\
        \y_n\y_1^{\top}(M-A_n-A_1) & \y_n\y_2^{\top}(M-A_n-A_2) & \cdots & \y_n\y_n^{\top}(M-A_n) \\
    \end{array}\right ] \\ 
     = &M\y\y^{\top}-\y\z^{\top}-\z\y^{\top} + \text{diag}(A_1\y_1\y_1^{\top},\dots,A_n\y_n\y_n^{\top}) \\ 
     = &\frac{1}{M}\cdot (M\y-\z)(M\y-\z)^{\top} + \text{diag}(A_1\y_1\y_1^{\top},\dots,A_n\y_n\y_n^{\top})-\frac{1}{M}\z\z^{\top} \\
     \succ &\text{diag}(A_1\y_1\y_1^{\top},\dots,A_n\y_n\y_n^{\top})-\frac{1}{M}\z\z^{\top}.
    \end{align*}
 \end{small}    
    Therefore, it suffices to prove that the matrix $$\tilde{H}=M\text{diag}(A_1\y_1\y_1^{\top},\dots,A_n\y_n\y_n^{\top})-\z\z^{\top}$$ is PSD. For any $\t=[\t_1;\cdots; \t_n]\in\RR^{dn\times 1}$ where $\t_i\in\RR^d$, we can verify that
    \begin{align*}
        \t^{\top}\tilde{H}\t &= M\sum_{i=1}^n A_i\t_i^{\top}\y_i\y_i^{\top} \t_i - \t^{\top}\z\z^{\top} \t\\ 
        & =\sum_{i=1}^n A_i\sum_{i=1}^n A_i\t_i^{\top}\y_i\y_i^{\top} \t_i - \t^{\top}\z\z^{\top} \t \\
        &= \sum_{1\leq i< j\leq n}A_iA_j(\y_i^{\top}\t_i-\y_j^{\top}\t_j)^2\geq 0.
    \end{align*}
    
    Therefore, the block matrix $[H^{(1)}_{ij}]$ is always PSD for any $\x$ and $\s$. A sufficient condition for $[H_{ij}^{\cG}]$ to be negative definite is thus $H_{ii}^{(0)}$ being positive definite (PD), i.e., $H_{ii}^{(0)}(\s,\x) \succ 0, \forall \s,\x$. It remains to show that
    \begin{equation}\label{eq:57}
        \EE_{\x\sim\X}\left[\Big[-\frac{\partial^2 \sigma}{\partial \s_i^2}-\Big(\frac{\partial \sigma}{\partial \s_i}\Big)\Big(\frac{\partial \sigma}{\partial \s_i}\Big)^{\top}(1-\frac{A_i}{M})\Big]\cdot A_i(M-A_i)\frac{1}{M^2}\right] \succ 0.
    \end{equation}
    And a sufficient condition for Eq \eqref{eq:57} to hold is
    \begin{equation}
      \EE_{\x\sim\X}\left[-\frac{\partial^2 \sigma}{\partial \s_i^2}-\Big(\frac{\partial \sigma}{\partial \s_i}\Big)\Big(\frac{\partial \sigma}{\partial \s_i}\Big)^{\top}\right] \succeq 0,
    \end{equation}
    which completes the proof.

\end{proof}

\section{Proof of Theorem \ref{thm:opt}}
\begin{proof}
    Since the utility functions of \cgame{} are twice differentiable, any LNE $\s$ of \cgame{} satisfies the following definition
\begin{equation}
    \s_{i} = \arg\max_{\z_i\in B(\s_i, \delta)} u_i(\z_i, \s_{-i};\w)
\end{equation}
must also satisfy the first-order condition
$\frac{\partial u_i}{\partial \s_i}\Big|_{\s=(\s_i,\s_{-i)}} = 0$. If we let 
\begin{equation}
    F(\s, \w)=\left(\frac{\partial u_1(\s;\w)}{\partial \s_1}, \cdots, \frac{\partial u_n(\s;\w)}{\partial \s_n}\right): \RR^{dn}\times \RR^{m}_{\geq 0}\rightarrow \RR^{dn}
\end{equation}
be a vector-valued function, the constraint \eqref{eq:bi-level-opt-inner} can be rewritten into 
\begin{equation}
    F(\s^*(\w), \w)=0.
\end{equation}

From the implicit function theorem \citep{krantz2002implicit}, the derivative of $\s^*$ w.r.t. $\w$ can be written as

$$\frac{d \s}{d\w} = -\left(\frac{\partial F}{\partial \s}\right)^{-1} \cdot \frac{\partial F}{\partial \w}, $$
where $\left[\frac{\partial F}{\partial \s}\right]_{nd\times nd}, \left[\frac{\partial F}{\partial \w}\right]_{nd\times m}$ are the Jacobian matrices, and 

\begin{equation}\label{eq:UIR_gradient}
    \frac{d W}{d \w}=\frac{d W}{d \s}\cdot \frac{d\s}{d \w}=-\frac{d W}{d \s}\cdot \left(\frac{\partial F}{\partial \s}\right)^{-1} \cdot \frac{\partial F}{\partial \w},
\end{equation}

where $\left(\frac{d W}{d \s}\right)_{1\times nd}$ is the partial derivative of $W$ w.r.t. $\s$.

Since $w_j\geq 0$, we apply a change of variable and denote each $w_j$ as $e^{w_j}$ instead. Next we calculate each term of the RHS of \eqref{eq:UIR_gradient} to obtain an estimation of the gradient of our objective welfare function $W$ to the user weight vector $\w$. Without loss of generality we let the user distribution $\X$ be a uniform distribution on unit basis $\{\e_1,\cdots,\e_d\}$ and $m=d$. The utility functions given in Eq \eqref{eq:creator_expected_utility_func2} and the user welfare function read

    \begin{equation}
        W(\s) =  \frac{1}{m}\sum_{j=1}^d \sum_{i=1}^n \s_i^{\top}\x_j\cdot \frac{\exp[\beta^{-1} \s_i^{\top}\x_j]}{\sum_{k=1}^n \exp[\beta^{-1} \s_k^{\top}\x_j]} .
    \end{equation}
    \begin{equation}
       u_i(\s_i,\s_{-i})=\frac{1}{m}\sum_{j=1}^d e^{w_j}\cdot\frac{\exp[\beta^{-1} \s_i^{\top}\x_j]}{\sum_{k=1}^n \exp[\beta^{-1} \s_k^{\top}\x_j]} , i\in[K]. 
    \end{equation}

If we denote $A_{ij}=\exp[\beta^{-1} \s_i^{\top}\x_j], M_j=\sum_{k=1}^n \exp[\beta^{-1} \s_k^{\top}\x_j]$, then $\frac{A_{ij}}{M_j}=P_i(\s,\x_j)$ is exactly the probability of matching content $\s_i$ to $\x_j$. Given the assumption that $n$ is sufficiently large, we have $\frac{A_{ij}}{M_j}=o(1)$ is sufficiently small for any $i$ and therefore we ignore the high-order infiintesimal terms such as $\frac{A_{ij}^2}{M_j^2}, \frac{A_{kj}A_{ij}}{M_j^2}$ in the following derivation.

\begin{align}\notag
    \frac{d W}{d \s_i}=&\frac{1}{m}\sum_{j=1}^d \x_j \left[ \frac{A_{ij}}{M_j} + \beta^{-1}\s_i^{\top}\x_j\left(\frac{A_{ij}}{M_j}-\frac{A_{ij}^2}{M_j^2}\right)\right] \\ \notag &-\frac{1}{m}\sum_{j=1}^d \x_j \left[\sum_{k\neq i} \beta^{-1}\s_k^{\top}\x_j \frac{A_{kj}A_{ij}}{M_j^2}\right] \\ 
    \approx & \frac{1}{m}\sum_{j=1}^d \x_j\frac{A_{ij}}{M_j} \left( 1 + \beta^{-1}\s_i^{\top}\x_j\right), i \in [n],
\end{align}
where $\bar{\pi}(\x_j)\triangleq \sum_{k=1}^n \s_k^{\top}\x_j\frac{A_{kj}}{M_j}$.

Next we calculate each term in the RHS of Eq \eqref{eq:UIR_gradient}. The $i$-th block of $F(\s,\w)$ is a $d$-dimensional vector given by

\begin{align}\notag
    F(\s,\w)_i=&\frac{1}{m}\sum_{j=1}^d \x_j \left[  \beta^{-1}e^{w_j}\left(\frac{A_{ij}}{M_j}-\frac{A_{ij}^2}{M_j^2}\right)\right] \\ 
    \approx & \frac{1}{m}\sum_{j=1}^d \x_j\frac{A_{ij}}{M_j} \beta^{-1}e^{w_j}, i \in [n],
\end{align}
the $(i,j)$-th block of $\frac{\partial F}{\partial \w}$ is a $d$-dimensional vector given by
\begin{align}\notag
    \left[\frac{\partial F}{\partial \w}\right]_{ij}=& \frac{1}{m}\x_j \beta^{-1}e^{w_j}\left(\frac{A_{ij}}{M_j}-\frac{A_{ij}^2}{M_j^2}\right) \\ 
    \approx & \frac{1}{m}\x_j \beta^{-1}\frac{A_{ij}}{M_j}e^{w_j}, i \in [n], j\in[d].
\end{align}
Since $\{\x_i\}_{i=1}^n$ are orthogonal basis, the non-diagonal blocks of matrix $\frac{\partial F}{\partial \s}$ are all zero matrices and the $i$-th diagonal block of matrix $\frac{\partial F}{\partial \s}$ is given by 
\begin{align}
    \left[\frac{\partial F}{\partial \s}\right]_{ii}&= \frac{1}{m\beta^2}\sum_{j=1}^de^{w_j}\left[\x_j\x_j^{\top}\left(\frac{A_{ij}}{M_j}-\frac{3A_{ij}^2}{M_j^2}+\frac{2A_{ij}^3}{M_j^3}\right) \right] \\ \notag
    & \approx \frac{1}{m\beta^2}\sum_{j=1}^de^{w_j}\left[\x_j\x_j^{\top}\frac{A_{ij}}{M_j} \right], i\in[n].
\end{align}

Therefore, we can derive a approximation of $\frac{dW}{d\w}$ as below:

\begin{align}\notag
    \frac{d W}{d w_j}&=-\frac{d W}{d \s}\cdot \left(\frac{\partial F}{\partial \s}\right)^{-1} \cdot \left(\frac{\partial F}{\partial \w}\right)_j \\\notag
    & \approx -\sum_{i=1}^n \Bigg\{ \frac{1}{m}\sum_{k=1}^d \x^{\top}_k\frac{A_{ik}}{M_k} \left( 1 + \beta^{-1}\s_i^{\top}\x_k\right) \\ \notag
    &\cdot m\beta^2\text{diag}^{-1}(e^{w_1}A_{i1}/M_1, \cdots, e^{w_L}A_{id}/M_d)  \cdot\frac{1}{m}\x_j \beta^{-1}\frac{A_{ij}}{M_j}e^{w_j} \Bigg\} \\ \notag
    & = -\frac{\beta^2}{m}\sum_{i=1}^n e^{-w_j}\left(1+\beta^{-1}\s_i^{\top}\x_j\right)\beta^{-1}\frac{A_{ij}}{M_j}e^{w_j} \\ \label{eq:72}
    & \approx -\frac{1}{m }\sum_{i=1}^n \s_i^{\top}\x_j \frac{A_{ij}}{M_j} \\ \notag
    & =-\frac{1}{m} \sum_{i=1}^n \pi(\s_i, \x_j) P_i(\s, \x_j)  \\ \label{eq:75}
    & = -\frac{1}{m} \EE_{\s}[\pi(\x_j)],
\end{align}
where \eqref{eq:72} holds because $\beta^{-1}>>1$.

Therefore, Eq \eqref{eq:75} suggests that the following update rule
\begin{equation}
    e^{w'_j} = e^{w_j}\cdot e^{-\eta \bar{\pi}(\x_j)}
\end{equation}
aligns with the gradient direction of $W(\w)$,  
which yields Eq \eqref{eq:84}.

\end{proof}

\end{document}